\begin{document}

\title{Dimension Independent Similarity Computation}

\author{\name Reza Bosagh Zadeh \email rezab@stanford.edu \\
       \addr Institute for Computational and Mathematical Engineering\\
       Stanford University\\
       Stanford, CA 94305, USA
       \AND
       \name Ashish Goel \email ashishg@stanford.edu \\
       \addr Department of Management Science and Engineering\\
       Stanford University\\
       Stanford, CA 94305, USA}
       
\editor{Inderjit Dhillon}

\maketitle

\begin{abstract}
We present a suite of algorithms for Dimension Independent Similarity Computation (DISCO)
to compute all pairwise similarities between very high-dimensional sparse vectors.
All of our results are provably independent of dimension, meaning that
apart from the initial cost of trivially reading in the data, all subsequent
operations are independent of the dimension; thus the dimension can be very large.
We study Cosine, Dice, Overlap, and the Jaccard similarity measures. 
For Jaccard similiarity we include an improved version of MinHash.
Our results are geared toward the MapReduce framework. We empirically validate our
theorems with large scale experiments using data from the social networking site Twitter. 
At time of writing, our
algorithms are live in production at twitter.com.
\end{abstract}

\begin{keywords}
 Cosine, Jaccard, Overlap, Dice, Similarity, MapReduce, Dimension Independent
\end{keywords}

\section{Introduction}

Computing similarity between all pairs of vectors in large-scale datasets is a challenge. Traditional approaches
of sampling the dataset are limited and linearly dependent on the dimension of the data. We present an approach whose complexity is independent of the data dimension and geared towards modern distributed systems, in particular the MapReduce framework \citep{mapreduce}. 

MapReduce is a programming model for processing large data sets, typically used to do distributed computing on clusters of commodity computers. With large amount of processing power at hand, it is very tempting to solve problems by brute force. However, we show how to combine clever sampling techniques with the power of MapReduce to extend its utility.

Consider the problem of finding all pairs of similarities between $D$ indicator (0/1 entries) vectors, each of dimension $N$. In particular we focus on cosine similarities between all pairs of $D$ vectors in $\mathbb{R}^N$. Further assume that each dimension is $L$-sparse, meaning each dimension has at most $L$ non-zeros across all points. For example, typical values to compute similarities between all pairs of a subset of Twitter users can be: $N = 10^9$ (the universe of Twitter users), $D = 10^7$ (a subset of Twitter users), $L = 1000$. To be more specific: each of $D$ users can be represented by an $N$ dimensional vector that indicates which of the $N$ universal users are followed. The sparsity parameter ($L=1000$) here assumes that each user follows at most 1000 other users.

There are two main complexity measures for MapReduce: ``shuffle size" and ``reduce-key complexity", defined shortly \citep{ashishperf}. It can be easily shown that the naive approach for computing similarities will have $O(NL^2)$ emissions, which for the example parameters we gave is infeasible. The number of emissions in the map phase is called the ``shuffle size", since that data needs to be shuffled around the network to reach the correct reducer. Furthermore, the maximum number of items reduced to a single key can be as large as $N$. Thus the ``reduce-key complexity" for the naive scheme is $N$.

We can drastically reduce the shuffle size and reduce-key complexity by some clever sampling with the DISCO scheme
described in this paper. In this case, the output of the reducers are random variables whose expectations are the similarities. Two proofs are needed to justify the effectiveness of this scheme: first, that the expectations are indeed correct and obtained with high probability, and second, that the shuffle size is greatly reduced. We prove both of these claims. In particular, in addition to correctness, we prove that to estimate similarities above $\epsilon$, the shuffle size of the DISCO scheme is only $O(DL \log(D)/\epsilon)$, with no dependence on the dimension $N$, hence the name.

This means as long as there are enough mappers to read the data, one can use the DISCO sampling scheme to make the shuffle size tractable. Furthermore, each reduce key gets at most $O(\log(D)/\epsilon)$ values, thus making the reduce-key complexity tractable too. Within Twitter inc, we use the DISCO sampling scheme to compute similar users. We have also used the scheme to find highly similar pairs of words by taking each dimension to be the indicator vector that signals in which tweets the word appears. We empirically verify the proven claims and observe large reductions in shuffle size in this paper.

Our sampling scheme can be used to implement many other similarity measures.
We focus on four similarity measures: Cosine, Dice, Overlap, and the Jaccard similarity measures. 
For Jaccard similiarity we present an improved version of the well known MinHash scheme \citep{minhash}. Our framework operates
under the following assumptions, each of which we justify:

First, we focus on the case where each dimension is sparse and therefore the natural way to store the data is by segmenting it into dimensions. In our application each dimension is represented by a tweet, thus this assumption was natural.
Second, our sampling scheme requires a ``background model", meaning the magnitude of each vector is assumed to be known
and loaded into memory. In our application this was not a hurdle, since the magnitudes of vectors in our corpus needed to be
computed for other tasks. In general this may require another pass over the dataset. To further address the issue, in the streaming computation model, we can remove the dependence by paying an extra logarithmic factor in memory used.
Third, we prove results on highly similar pairs, since common applications require thresholding the similarity score with a high threshold value.

A ubiquitous problem is finding all pairs of objects that are in some sense similar and in particular more similar than a threshold. For such applications of similarity, DISCO is particularly helpful since higher similarity pairs are estimated with provably better accuracy. There are many examples, including

\begin{itemize}
\item Advertiser keyword suggestions: When targeting advertisements via keywords, it is useful to expand the manually input 
set of keywords by other similar keywords, requiring finding all keywords more similar than a high threshold \citep{regelson2006predicting}. The vector representing a 
keyword will often be an indicator vector indicating in which dimensions the keywords appear.
\item Collaborative filtering: Collaborative filtering applications require knowing which users have similar interests. For this reason, given a person or user, it is required to find all other objects more similar than a particular threshold, for which our algorithms are well suited.
\item Web Search: Rewriting queries given to a search engine is a common trick used to expand the coverage of the search results \citep{abhishek2007keyword}. Adding clusters of similar queries to a given query is a natural query expansion strategy.
\end{itemize}

These applications have been around for many years, but the scale at which we need to solve them keeps steadily 
increasing. This is particularly true with the rise of the web, and social networks such as Twitter, Facebook, and Google Plus among others. Google currently holds an index of more than 1 trillion webpages, on which duplicate detection must be a daunting task. For our experiments we use the Twitter dataset, where the number of tweets in a single day surpasses 200 million. Many of the applications,
including ours involve domains where the dimensionality of the data far exceeds the number of points. In our case the dimensionality is large (more than 200 million), and we can prove that apart from initially reading the data (which cannot be avoided), 
subsequent MapReduce operations will have input size independent of the number of dimensions. Therefore, as long as the data can
be read in, we can ignore the number of dimensions.

A common technique to deal with large datasets is to simply sample. Indeed, our approach too, is a sampling scheme; 
however, we sample in a nontrivial way so that points that have nonzero
entries in many dimensions will be sampled with lower probability than points that are only present in a few dimensions.
Using this idea, we can remove the dependence on dimensionality while being able to mathematically prove -- and empirically verify -- accuracy.

Although we use the MapReduce \citep{mapreduce} framework and discuss shuffle size, the sampling strategy we use can be
generalized to other frameworks. We focus on MapReduce because it is the tool of choice for large scale
computations at Twitter, Google, and many other companies dealing with large scale data. 
However, the DISCO sampling strategy is potentially useful whenever one is computing a number between 0 and 1
by taking the ratio of an unknown number (the dot product in our case) by some known number (e.g. the magnitude). This is a high-level description, and we give four concrete examples, along with proofs, and experiments. Furthermore, DISCO improves the implementation of the well-known MinHash \citep{minhash} scheme in any MapReduce setup.

\section{Formal Preliminaries} \label{sec:formal}

Let $T = \{t_1, \ldots, t_N\}$ represent $N$ documents, each of a length of no more than $L$ words. In our context, the documents are tweets (documents are also the same as dimensions for us) from the social networking site Twitter, or could be fixed-window contexts from a large corpus of documents, or any other dataset where dimensions have no more than $L$ nonzero entries and the dataset is available dimension-by-dimension. We are interested in similarity scores between pairs of words in a dictionary containing $D$ words $\{w_1, \ldots, w_D\}$. The number of documents in which two words $w_i$ and $w_j$ co-occur is denoted $\#(w_i, w_j)$. The number of documents in which a single word $w_i$ occurs is denoted $\#(w_i)$. 

To each word in the dictionary, an $N$-dimensional indicator vector is assigned, indicating in which documents the word occurs. We operate within a MapReduce framework where each document is an input record. We denote the number of items output by the map phase as the `shuffle size'. Our algorithms will have shuffle sizes that provably do not depend on $N$, making them particularly attractive for very high dimensional tasks. 

We focus on 5 different similarity measures, including cosine similarity, which is very popular and produces high quality results across different domains \citep{chien2005semantic,chuang2005taxonomy,sahami2006web,spertus2005evaluating}. Cosine similarity is simply the vector normalized dot product:$\frac{\#(x, y)}{\sqrt{\#(x)} \sqrt{\#(y)}}$ where $\#(x) = \sum_{i=1}^N x[i]$ and $\#(x,y) = \sum_{i=1}^N x[i]  y[i]$.
In addition to cosine similarity, we consider many variations of similarity scores that use the dot product. They are outlined in Table \ref{measures}.

To compare the performance of algorithms in a MapReduce framework, we report and analyze shuffle size, which is more reliable than wall time or any other implementation-specific measure. We define shuffle size as the total output of the Map Phase, which is what will need to be ``shuffled" before the Reduce phase can begin. Our results and theorems hold across any MapReduce implementation such as Hadoop \citep{hadoop}\citep{gates2009building} or Google's MapReduce \citep{mapreduce}. We focus on shuffle size because after one trivially reads in the data via mappers, the shuffle phase is the bottleneck since our mappers and reducers are all linear in their input size. In general, MapReduce algorithms are usually judged by two performance measures: largest reduce bucket and shuffle size. Since the largest reduce bucket is not at all a problem for us, we focus on shuffle size.

For many applications, and our applications in particular, input vectors are sparse, meaning that the large majority of entries are 0. A sparse vector representation for a vector $x$ is the set of all pairs $(i,x[i])$ such that $x[i]>0$ over all $i = 1\ldots N$. The size of a vector $x$, which we denote as $\#(x)$ is the number of such pairs. We focus on the case where each \textit{dimension} is sparse and therefore the natural way to store the data is segmented into dimensions.

Throughout the paper we formally prove results for pairs more similar than a threshold, called $\epsilon$. Our algorithms
will often have a tradeoff between accuracy and shuffle size, where the tradeoff parameter is $p$. By design, the larger $p$ becomes,
the more accurate the estimates and the larger the shuffle size becomes.

We assume that the dictionary can fit into memory but that the number of dimensions (documents) is so large that many machines will be needed to even hold the documents on disk. Note that documents and dimensions are the same thing, and we will use these terms interchangeably. We also assume that the magnitudes of the dimensions are known and available in all mappers and reducers.
\newcommand\T{\rule{0pt}{3.0ex}}
\newcommand\B{\rule[-1.2ex]{0pt}{0pt}}

\begin{table}[H]          
\begin{center}
\setlength{\columnsep}{1.5in}

  \begin{tabular}{ | c | c | c | c |}
    \hline
    \textbf{Similarity} &  \textbf{Definition} & \textbf{Shuffle Size} & \textbf{Reduce-key size}\\  \hline    
    Cosine \T &  $\frac{\#(x, y)}{\sqrt{\#(x)} \sqrt{\#(y)}}$ & $O(D L \log(D)/\epsilon) $ & $O(\log(D)/\epsilon) $ \\[8pt]  \hline
    Jaccard \T  & $\frac{\#(x, y)}{ \#(x) + \#(y) - \#(x, y)}$ & $O((D/\epsilon) \log(D/\epsilon))$ & $O(\log(D/\epsilon)/\epsilon) $ \\[8pt] \hline
    Overlap \T & $\frac{\#(x, y)}{\min(\#(x), \#(y))}$ & $O(D L \log(D)/\epsilon) $ & $O(\log(D)/\epsilon) $ \\[8pt] \hline
    Dice \T & $\frac{2 \#(x, y)}{ \#(x) + \#(y) }$ & $O(D L \log(D)/\epsilon) $ & $O(\log(D)/\epsilon) $ \\[8pt] 
    \hline 
  \end{tabular}
   \caption{Similarity measures and the bounds we obtain. All  sizes are independent of $N$, the dimension. \label{measures}These are bounds for shuffle size without combining. Combining can only bring down these sizes.}   
\end{center}
\end{table}
We are interested in several similarity measures outlined in Table \ref{measures}. For each, we prove the shuffle size for the bottleneck step of the pipeline needed to compute all nonzero scores. Our goal is to compute similarity scores between all pairs of words, and prove accuracy results for pairs which have similarity above $\epsilon$. The naive approach is to first compute the dot product between all pairs of words in a Map-Reduce \citep{mapreduce} style framework. Mappers act on each document, and emit key-value pairs of the form $(w_i, w_j) \rightarrow 1$. These pairs are collected with the sum reducer, which gives the dot product $\#(w_i, w_j)$. The product is then used to trivially compute the formulas in Table \ref{measures}. The difficulty of computing the similarity scores lies almost entirely in computing $\#(w_i, w_j)$.

Shuffle size is defined as the number of key-value pairs emitted. The shuffle size for the above naive approach is $N {L \choose 2} = O(NL^2)$, which can be prohibitive for $N$ too large. Our algorithms remove the dependence on $N$ by dampening the number of times popular words are emitted as part of a pair (e.g. see algorithm \ref{alg:sampleemit}). Instead of emitting every pair, only some pairs are output, and instead of computing intermediary dot products, we directly compute the similarity score. 

\section{Complexity Measures}

An in-depth discussion of complexity measures for MapReduce is given in \cite{ashishperf}. We focus on shuffle size
and reduce-key complexity for two reasons. First,
focusing on reduce key complexity captures the performance of an idealized Map-Reduce system with
infinitely many mappers and reducers and no coordination overheads. A small reduce-key complexity guarantees that a Map-Reduce algorithm will not suffer from ``the curse of the last reducer" \citep{lastreducer}, a phenomenon in which the average work done by all reducers may be small, but due to variation in the size
of reduce records, the total wall clock time may be extremely large or, even worse, some reducers
may run out of memory.

Second, we focus on shuffle size since the total size of input/output
by all mappers/reducers captures the total file I/O done by the algorithm, and is often of the
order of the shuffle size.

Lastly, our complexity measures depend only on the algorithm, not on details of the Map-Reduce installation
such as the number of machines, the number of Mappers/Reducers etc., which is a desirable property
for the analysis of algorithms \citep{ashishperf}.

\section{Related Work} \label{sec:related}

Previously the all-pairs similarity search problem has been studied in \citep{broder1997syntactic, minhash}, in the context of identifying near-duplicate web pages \citep{broder1997syntactic}. We describe MinHash later in section \ref{sec:minhash}. We improve the 
vanilla implementation of MinHash on MapReduce to arrive at better shuffle sizes without (effectively) any loss in accuracy. We prove these
results in Section \ref{sec:minhash}, and verify them experimentally.

There are many papers discussing the all pairs similarity computation problem. In \citep{elsayed2008pairwise,rasmus}, the MapReduce framework is targeted, 
but there is still a dependence on the dimensionality. In \citep{lin2009brute}, all pairs are computed on MapReduce, but there is a focus on the life sciences domain. The all-pairs similarity search problem has also been addressed in the database community, where it is known as the similarity join problem \citep{arasu2006efficient, chaudhuri2006primitive, sarawagi2004efficient}. These papers are loosely assigned to one of two categories: First, signatures of the points to convert the nonexact matching problem to an exact matching problem followed by hash map and filtering false positives. Second, inverted list based solutions exploit information retrieval techniques.

In \cite{baraglia}, the authors consider the problem of finding highly similar pairs of documents in MapReduce. They discuss shuffle size briefly but do not provide proven guarantees. They also discuss reduce key size, but again do not provide bounds on the size. 

There is large body of work on the nearest neighbors problem, which is the problem of finding the $k$ nearest neighbors
of a given query point\citep{charikar2002similarity, fagin2003efficient, gionis1999similarity, indyk1998approximate}. Our problem, the all-pairs similarities computation is related to, but not identical to the nearest neighbor problem. As one of the
reviewers astutely pointed out, when only the $k$ nearest neighbors need to be found,
  optimizations can be made for that.  In particular, there are two
  problems with pairwise similarity computation, large computation
  time with respect to dimensions which we address, and large
  computation time with respect to the number of points.

In \citep{pantel2009web}, the authors propose a highly scalable term similarity algorithm, implemented in the MapReduce framework, and deployed over a 200 billion word crawl of the Web to compute pairwise similarities between terms. Their results are still dependent upon the dimension and they only provide experimental evaluations of their algorithms.

Other related work includes clustering of web data \citep{beeferman2000agglomerative, chien2005semantic, sahami2006web, spertus2005evaluating}. These applications of clustering typically employ relatively straightforward exact algorithms or approximation methods for computing similarity. Our work could be leveraged by these applications for improved performance or higher accuracy through reliance on proven results. We have also used DISCO in finding similar users in a production environment \cite{wtfpaper}.

\section{Algorithms and Methods} \label{sec:results}
The Naive algorithm for computing similarities first computes
dot products, then simply divides by whatever is necessary to obtain a similarity score.
i.e. in a MapReduce implementation:
\begin{enumerate}
\item Given document $t$, Map using NaiveMapper (Algorithm \ref{alg:stupidmapper})
\item Reduce using the NaiveReducer (Algorithm \ref{alg:stupidreducer})
\end{enumerate}

\algsetup{indent=2em}
\begin{algorithm}[H]
\caption{NaiveMapper$(t)$}\label{alg:stupidmapper}
\begin{algorithmic} [4]
\FOR{all pairs $(w_1, w_2)$ in $t$}
	\STATE emit $((w_1, w_2) \rightarrow 1)$
\ENDFOR
\end{algorithmic}
\end{algorithm}

\algsetup{indent=2em}
\begin{algorithm}[H]
\caption{NaiveReducer$((w_1, w_2), \langle r_1, \ldots, r_R \rangle)$}\label{alg:stupidreducer}
\begin{algorithmic} [4]
\STATE $a =  \sum_{i=1}^{R} r_i$
\STATE output $\frac{a}{\sqrt{\#(w_1)\#(w_2)}}$
\end{algorithmic}
\end{algorithm}

The above steps will compute all dot products, which will then be scaled by appropriate factors for each of the similarity scores. Instead of using naive algorithm, we modify the mapper of the naive algorithm and replace it with Algorithm \ref{alg:discomapper}, and replace the reducer with Algorithm \ref{alg:discoreducer} to directly compute the actual similarity score, not dot products.
The following sections detail how to obtain dimensionality independence for each of the similarity scores.

\algsetup{indent=2em}
\begin{algorithm}[H]
\caption{DISCOMapper$(t)$}\label{alg:discomapper}
\begin{algorithmic} [4]
\FOR{all pairs $(w_1, w_2)$ in $t$}
	\STATE emit using custom Emit function
\ENDFOR
\end{algorithmic}
\end{algorithm}

\algsetup{indent=2em}
\begin{algorithm}[H]
\caption{DISCOReducer$((w_1, w_2), \langle r_1, \ldots, r_R \rangle)$}\label{alg:discoreducer}
\begin{algorithmic} [4]
\STATE $a =  \sum_{i=1}^{R} r_i$
\STATE output $a \frac{\epsilon}{p}$
\end{algorithmic}
\end{algorithm}

\subsection{Cosine Similarity}

To remove the dependence on $N$, we replace the emit function with Algorithm \ref{alg:sampleemit}.

\algsetup{indent=2em}
\begin{algorithm}[H]
\caption{CosineSampleEmit$(w_1, w_2)$ - $p/\epsilon$ is the oversampling parameter}\label{alg:sampleemit}
\begin{algorithmic} [4]
\STATE With probability $$\frac{p}{\epsilon} \frac{1}{\sqrt{\#(w_1)} \sqrt{\#(w_2)}}$$ emit $((w_1, w_2) \rightarrow 1)$
\end{algorithmic}
\end{algorithm}

Note that the more popular a word is, the less likely it is to be output. This is the key observation leading to shuffle size independent of the dimension. We use a slightly different reducer, which instead of calculating $\#(w_1, w_2)$, computes $\cos(w_1, w_2)$ directly without any intermediate steps. The exact estimator is given in the proof of Theorem \ref{expgood}.

Since the CosineSampleEmit algorithm is only guaranteed to produce the correct similarity score in expectation, we must show that the expected value is highly likely to be obtained. This guarantee is given in Theorem \ref{expgood}. 

\begin{theorem} \label{expgood}
For any two words $x$ and $y$ having $\cos(x,y) \geq \epsilon$, let $X_1, X_2, \ldots, X_{\#(x,y)}$ represent indicators for the coin flip in calls to CosineSampleEmit with $x, y$ parameters, and let $X = \sum_{i=1}^{\#(x,y)} X_i$. For any $1> \delta > 0$, we have
$$\Pr \left[ \frac{\epsilon}{p} X > (1+\delta)\cos(x,y) \right] \leq \left(\frac{e^\delta}{(1+\delta)^{(1+\delta)}}\right)^p $$
and 
$$\Pr \left[ \frac{\epsilon}{p} X < (1-\delta)\cos(x,y) \right]  < \exp(-p\delta^2/2)$$
\end{theorem}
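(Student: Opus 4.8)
The plan is to recognize the sum $X = \sum_{i=1}^{\#(x,y)} X_i$ as a sum of independent Bernoulli random variables and apply the standard multiplicative Chernoff bounds. The key first step is to compute $\mathbb{E}[X]$. Each $X_i$ is the indicator of the coin flip in one call to \texttt{CosineSampleEmit}$(x,y)$, which succeeds with probability $\frac{p}{\epsilon}\frac{1}{\sqrt{\#(x)}\sqrt{\#(y)}}$; there are exactly $\#(x,y)$ such calls (one for each document in which both $x$ and $y$ appear). Hence by linearity
$$\mathbb{E}[X] = \#(x,y)\cdot\frac{p}{\epsilon}\frac{1}{\sqrt{\#(x)}\sqrt{\#(y)}} = \frac{p}{\epsilon}\cos(x,y),$$
so $\mathbb{E}\!\left[\frac{\epsilon}{p}X\right] = \cos(x,y)$, confirming the estimator is unbiased. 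Write $\mu := \mathbb{E}[X] = \frac{p}{\epsilon}\cos(x,y)$.

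**Next I would** invoke the two standard Chernoff inequalities for a sum of independent $\{0,1\}$ variables: for any $\delta>0$,
$$\Pr[X > (1+\delta)\mu] \le \left(\frac{e^\delta}{(1+\delta)^{(1+\delta)}}\right)^{\mu},\qquad \Pr[X < (1-\delta)\mu] < \exp(-\mu\delta^2/2).$$
Substituting $\mu = \frac{p}{\epsilon}\cos(x,y)$, the events $\{X > (1+\delta)\mu\}$ and $\{\frac{\epsilon}{p}X > (1+\delta)\cos(x,y)\}$ coincide, and likewise for the lower tail, so the bounds become $\left(\frac{e^\delta}{(1+\delta)^{(1+\delta)}}\right)^{\frac{p}{\epsilon}\cos(x,y)}$ and $\exp\!\left(-\frac{p}{\epsilon}\cos(x,y)\,\delta^2/2\right)$ respectively.

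**The last step** — and the only real subtlety — is to use the hypothesis $\cos(x,y) \ge \epsilon$ to clean up the exponents. Since $\frac{\cos(x,y)}{\epsilon} \ge 1$ and both base quantities $\frac{e^\delta}{(1+\delta)^{(1+\delta)}}$ and $\exp(-\delta^2/2)$ lie in $(0,1)$ for $0<\delta<1$, raising them to a power at least $p$ only decreases them; that is, $z^{p\cos(x,y)/\epsilon} \le z^p$ whenever $z \in (0,1)$ and $\cos(x,y)\ge\epsilon$. Applying this to each bound yields exactly
$$\Pr\!\left[\frac{\epsilon}{p}X > (1+\delta)\cos(x,y)\right] \le \left(\frac{e^\delta}{(1+\delta)^{(1+\delta)}}\right)^{p},\qquad \Pr\!\left[\frac{\epsilon}{p}X < (1-\delta)\cos(x,y)\right] < \exp(-p\delta^2/2),$$
as claimed. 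I expect the main obstacle to be purely expository rather than mathematical: being careful that the number of Bernoulli trials contributing to $X$ is precisely $\#(x,y)$ (so that the popularity-based sampling probability telescopes against $\#(x,y)$ to give exactly $\cos(x,y)$ up to the $p/\epsilon$ factor), and that the monotonicity argument in the exponent needs $\cos(x,y)\ge\epsilon$ in the right direction for both tails.
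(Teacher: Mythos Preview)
Your proposal is correct and follows essentially the same approach as the paper: compute $\mu = E[X] = \frac{p}{\epsilon}\cos(x,y) \ge p$ from the hypothesis $\cos(x,y)\ge\epsilon$, apply the two multiplicative Chernoff bounds with exponent $\mu$, and then use $\mu \ge p$ together with the fact that the bases lie in $(0,1)$ to replace the exponent by $p$. The paper's proof is exactly this argument, only more tersely written.
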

\begin{proof}
We use $\frac{\epsilon}{p} X$ as the estimator for $\cos(x,y)$. Note that 
$$\mu_{xy} = E [X] = \#(x,y) \frac{p}{\epsilon} \frac{1}{\sqrt{\#(x)} \sqrt{\#(y)}} = \frac{p}{\epsilon} \cos(x,y) \geq  p$$ 
Thus by the multiplicative form of the Chernoff bound,
$$\Pr \left[ \frac{\epsilon}{p} X > (1+\delta)\cos(x,y) \right]
=\Pr \left[ \frac{\epsilon}{p} X > (1+\delta)\frac{\epsilon}{p}  \frac{p}{\epsilon} \cos(x,y) \right]$$
$$=\Pr \left[ X > (1+\delta)\mu_{xy}\right] < \left(\frac{e^\delta}{(1+\delta)^{(1+\delta)}}\right)^{\mu_{xy}}
\leq \left(\frac{e^\delta}{(1+\delta)^{(1+\delta)}}\right)^p$$

Similarly, by the other side of the multiplicative Chernoff bound, we have
$$\Pr \left[ \frac{\epsilon}{p} X < (1-\delta)\cos(x,y) \right] =\Pr[X < (1-\delta)\mu_{xy}] < \exp(-\mu_{xy}\delta^2/2) \leq \exp(-p\delta^2/2)$$
\end{proof}

Since there are ${D \choose 2}$ pairs of words in the dictionary, set $p = \log(D^2) = 2 \log(D)$
and use union bound with theorem \ref{expgood} to ensure the above bounds hold simultaneously for all pairs $x,y$ having $\cos(x,y) \geq \epsilon$. It is worth mentioning that $\left(\frac{e^\delta}{(1+\delta)^{(1+\delta)}}\right)$ and $\exp(-\delta^2/2)$ are always
less than 1, thus raising them to the power of $p$ brings them down exponentially in $p$. Note that although we decouple $p$ and
$\epsilon$ for the analysis of shuffle size, in a good implementation that are tightly coupled and the expression $p/\epsilon$
can be thought of as a single parameter to tradeoff accuracy and shuffle size. In a large scale experiment described in section \ref{largeexperiment} we set $p/\epsilon$ in a coupled manner.

Now we show that the shuffle size is independent of $N$, which is a great improvement over the naive approach when $N$ is large.
To see the usefulness of these bounds, it is worth noting that we prove they are almost optimal (i.e. no other algorithm can do much better). In Theorem \ref{smallshuffle} we prove that any algorithm that purports to accurately calculate highly similar pairs must at least output them, and sometimes there are at least $DL$ such pairs, and so any algorithm that is accurate on highly similar pairs must have at least $DL$ shuffle size. We are off optimal here by only a $\log(D)/\epsilon$ factor.

\begin{theorem} \label{smallshuffle}
The expected shuffle size for CosineSampleEmit is $O(D L \log(D) / \epsilon)$ and $\Omega(DL)$.
\end{theorem}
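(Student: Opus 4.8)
The plan is to handle the two bounds separately; the upper bound is the substantive one. For the upper bound, write the shuffle size $S$ as a sum of indicator random variables, one for each time a word pair $(w_1,w_2)$ co-occurs in some document $t$ and the coin in the corresponding call to CosineSampleEmit comes up heads. By linearity of expectation,
$$E[S] = \frac{p}{\epsilon}\sum_{t}\ \sum_{\{w_1,w_2\}\subseteq t}\frac{1}{\sqrt{\#(w_1)\#(w_2)}},$$
where the inner sum is over unordered pairs of distinct words appearing together in document $t$. Bounding each summand by $1$ and using $|t|\le L$ would give $O(NL^2 p/\epsilon)$, which still depends on $N$; the whole point is to reorganize the sum so that $N$ disappears.

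The key step I would use is the AM--GM inequality $\frac{1}{\sqrt{\#(w_1)\#(w_2)}}\le \frac12\left(\frac{1}{\#(w_1)}+\frac{1}{\#(w_2)}\right)$. Inside a fixed document $t$, summing this over all co-occurring pairs and noting that each word $w\in t$ occurs in at most $L-1$ of them, the inner sum is at most $\frac{L-1}{2}\sum_{w\in t}\frac{1}{\#(w)}$. Then I would swap the order of summation: $\sum_{t}\sum_{w\in t}\frac{1}{\#(w)}=\sum_{w}\frac{1}{\#(w)}\,|\{t: w\in t\}|=\sum_{w}\frac{\#(w)}{\#(w)}=D$, since $\#(w)$ is exactly the number of documents containing $w$. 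This yields $E[S]\le \frac{p}{\epsilon}\cdot\frac{(L-1)D}{2}$, and plugging in the earlier choice $p=2\log D$ gives $E[S]=O(DL\log(D)/\epsilon)$. The only fine point is to read CosineSampleEmit's probability as $\min(1,\cdot)$ when $p/\epsilon$ exceeds $\sqrt{\#(w_1)\#(w_2)}$, which only lowers the number of emissions and so does not hurt the upper bound.

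For the $\Omega(DL)$ lower bound I would exhibit a worst-case instance. Partition the $D$ words into $D/L$ groups of size $L$ and, for each group, introduce $m$ documents each containing exactly the $L$ words of that group (so $N=mD/L$; take $m\ge p/\epsilon$ so all emission probabilities are genuine). Then $\#(w)=m$ for every word, every within-group pair has $\#(w_1,w_2)=m$, hence $\cos(w_1,w_2)=1\ge\epsilon$, and the expected number of emissions for such a pair is $\frac{p}{\epsilon}\cdot\frac{m}{m}=\frac{p}{\epsilon}\ge 1$. Since there are $\frac{D}{L}\binom{L}{2}=\Theta(DL)$ such pairs, $E[S]=\Omega(DL)$ on this family (in fact $\Omega(DL\log(D)/\epsilon)$, so the upper bound is tight up to constants here). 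This also substantiates the surrounding remark: the output alone contains $\Theta(DL)$ pairs of similarity $1$, so no algorithm that correctly reports all highly similar pairs can have shuffle size below $\Omega(DL)$.

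I expect the main obstacle to be exactly this reorganization in the upper bound: the naive per-document pair count keeps the spurious factor $N$, and the trick is to apply AM--GM so that the geometric mean in the sampling probability becomes a sum of reciprocals, after which interchanging the two sums produces the cancellation $\sum_w \#(w)/\#(w)=D$. The lower bound is routine once the ``disjoint cliques of words'' instance is in hand.
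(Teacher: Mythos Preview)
Your proof is correct and follows essentially the same route as the paper: the key step in the upper bound is the AM--GM inequality $\tfrac{1}{\sqrt{\#(w_1)\#(w_2)}}\le\tfrac12\bigl(\tfrac{1}{\#(w_1)}+\tfrac{1}{\#(w_2)}\bigr)$, after which the dimension $N$ drops out; the lower bound construction (partition the dictionary into $D/L$ blocks of size $L$ and make each block a repeated document) is also the same. The only cosmetic difference is that the paper organizes the double sum by word pairs and then invokes $\sum_{j}\#(w_i,w_j)\le L\,\#(w_i)$, whereas you sum by documents first, use $|t|\le L$, and then swap to get $\sum_w \#(w)/\#(w)=D$; these are two orderings of the same double-counting and yield the same bound.
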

\begin{proof}
The expected contribution from each pair of words will constitute the shuffle size:
$$\sum_{i=1}^{D} \sum_{j=i+1}^D \sum_{k=1}^{\#(w_i, w_j)} \text{Pr}[\text{CosineSampleEmit}(w_i, w_j)] $$
$$= \sum_{i=1}^{D} \sum_{j=i+1}^D \#(w_i, w_j) \text{Pr}[\text{CosineSampleEmit}(w_i, w_j)]  $$
$$= \sum_{i=1}^{D} \sum_{j=i+1}^D \frac{p}{\epsilon} \frac{\#(w_i, w_j)}{\sqrt{\#(w_i)} \sqrt{\#(w_j)}} $$
$$ \leq \frac{p}{2 \epsilon} \sum_{i=1}^{D} \sum_{j=i+1}^D \#(w_i, w_j)( \frac{1}{\#(w_i)} + \frac{1}{\#(w_j)}) $$
$$\leq \frac{p}{\epsilon} \sum_{i=1}^{D} \frac{1}{\#(w_i)} \sum_{j=1}^D \#(w_i, w_j)$$
$$\leq \frac{p}{\epsilon} \sum_{i=1}^{D} \frac{1}{\#(w_i)} L \#(w_i) = \frac{p}{\epsilon} LD = O(DL \log(D)/\epsilon)$$

The first inequality holds because of the Arithmetic-Mean-Geometric-Mean inequality applied to $\{1/\#(w_i), 1/\#(w_j)\}$. The
last inequality holds because $w_i$ can co-occur with at most $\#(w_i) L$ other words. 
It is easy to see via Chernoff bounds that the above shuffle size is obtained with high probability.

To see the lower bound, we construct a dataset consisting of $D/L$ distinct documents of length $L$,
furthermore each document is duplicated $L$ times. To construct this dataset, consider 
grouping the dictionary into $D/L$ groups, each group containing $L$ words. A document is associated
with every group, consisting of all the words in the group. This document is then repeated $L$ times.
In each group, it is trivial to check that all pairs of words of have similarity exactly 1. There are ${L \choose 2}$
pairs for each group and there are $D/L$ groups, making for a total of
$(D/L) {L \choose 2} = \Omega(DL)$ pairs with similarity 1, and thus also at least $\epsilon$.
Since any algorithm that purports to accurately calculate highly-similar pairs must at least \textit{output} them,
and there are $\Omega(DL)$ such pairs, we have the lower bound.
\end{proof}

It is important to observe what happens if the output `probability' is greater than 1. We certainly Emit, but when the output probability is greater than 1, care must be taken during reducing to scale by the correct factor, since it won't be correct to divide by $p/ \epsilon$, which is
the usual case when the output probability is less than 1. Instead, we must divide by $\sqrt{\#(w_1)} \sqrt{\#(w_2)}$ because
for the pairs where the output probabilty is greater than 1, CosineSampleEmit and Emit are the same. Similar corrections have to be
made for the other similarity scores (Dice and Overlap, but not MinHash), so we do not repeat this point. Nonetheless it is an important one which arises during implementation.

Finally it is easy to see that the largest reduce key will have at most
$p / \epsilon  = O(\log(D)/\epsilon)$ values.
\begin{theorem} \label{smallreducekey}
The expected number of values mapped to a single key  by DISCOMapper is $p / \epsilon$.
\end{theorem}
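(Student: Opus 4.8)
The plan is to unwind what a ``key'' means for DISCOMapper and then reuse, almost verbatim, the expectation computation already done in the proof of Theorem~\ref{expgood}. A single reduce key is a word pair $(w_1,w_2)$, and the values routed to that key are precisely the emissions produced by the $\#(w_1,w_2)$ invocations of CosineSampleEmit$(w_1,w_2)$ — one invocation for each document in which both words occur. So the number of values landing at this key is the random variable $X = \sum_{i=1}^{\#(w_1,w_2)} X_i$, where the $X_i$ are exactly the independent coin flips of Algorithm~\ref{alg:sampleemit}.

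The core step is then a single application of linearity of expectation: $E[X] = \#(w_1,w_2)\cdot \frac{p}{\epsilon}\cdot\frac{1}{\sqrt{\#(w_1)}\sqrt{\#(w_2)}} = \frac{p}{\epsilon}\cos(w_1,w_2)$, which is the same quantity ($\mu_{xy}$) computed in the proof of Theorem~\ref{expgood}. Since $\cos(w_1,w_2)\le 1$ for every pair, this yields $E[X]\le p/\epsilon$, i.e.\ the claimed bound (for the threshold pairs with $\cos\ge\epsilon$ it additionally satisfies $E[X]\ge p$, so it sits in $[p,\,p/\epsilon]$). Substituting $p = 2\log D$ as in the shuffle-size analysis gives reduce-key load $O(\log(D)/\epsilon)$.

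The only point requiring care — the same one flagged in the discussion following Theorem~\ref{smallshuffle} — is the regime where the nominal emit ``probability'' $\frac{p}{\epsilon}\frac{1}{\sqrt{\#(w_1)\#(w_2)}}$ exceeds $1$. There CosineSampleEmit emits deterministically, so the key receives $\#(w_1,w_2)$ values rather than a fraction of them; but the probability exceeding $1$ means $\sqrt{\#(w_1)}\sqrt{\#(w_2)} < p/\epsilon$, and since $\#(w_1,w_2)\le \min(\#(w_1),\#(w_2))\le \sqrt{\#(w_1)}\sqrt{\#(w_2)}$, the count is still at most $p/\epsilon$. Hence the bound holds in all cases. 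To upgrade ``in expectation'' to a high-probability statement holding simultaneously across all ${D\choose 2}$ keys, I would invoke the Chernoff tail from Theorem~\ref{expgood} and a union bound over keys, exactly as for the shuffle-size bound. I do not anticipate a genuine obstacle here; the deterministic-emit edge case is the only subtlety, which is why I call it out explicitly.
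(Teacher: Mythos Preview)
Your argument is correct and is essentially the paper's own: both reduce to the observation that the expected number of emissions at key $(w_1,w_2)$ equals $\tfrac{p}{\epsilon}\cos(w_1,w_2)\le \tfrac{p}{\epsilon}$. The paper phrases this as ``the reducer output lies in $[0,1]$ and is the sum of $1$'s scaled by $\epsilon/p$, hence at most $p/\epsilon$ summands,'' whereas you compute $E[X]$ directly via linearity (and are more careful than the paper in explicitly handling the deterministic-emit edge case).
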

\begin{proof}
Note that the output of DISCOReducer is a number between 0 and 1. Since this is obtained
by normalizing the sum of all values reduced to the key by $p / \epsilon$, and all summands are at most
1, we trivially get that the number of summands is at most $p / \epsilon$.
\end{proof}

\subsection{Jaccard Similarity} \label{sec:minhash}
Traditionally MinHash \citep{minhash} is used to compute Jaccard similarity scores between all pairs in a dictionary.
We improve the MinHash scheme to run much more efficiently with a smaller shuffle size.

Let $h(t)$ be a hash function that maps documents to distinct numbers in $[0,1]$, and for any word $w$ define $g(w)$ (called the MinHash of $w$) to be the minimum value of $h(t)$ over all $t$ that contain $w$. Then $g(w_1)=g(w_2)$ exactly when the minimum hash value of the union with size $\#(w_1) + \#(w_2) - \#(w_1, w_2)$ lies in the intersection with size $\#(w_1, w_2)$. Thus
$$\text{Pr}[g(w_1)=g(w_2)] = \frac{\#(w_1, w_2)}{\#(w_1) + \#(w_2) - \#(w_1, w_2)} = \text{Jac}(w_1, w_2)$$

Therefore the indicator random variable that is 1 when $g(w_1)=g(w_2)$ has expectation equal to the Jaccard similarity between the two words. Unfortunately it has too high a variance to be useful on its own. The idea of the MinHash scheme is to reduce the variance by averaging together $k$ of these variables constructed in the same way with $k$ different hash functions. We index these $k$ functions using the notation $g_j(w)$ to denote the MinHash of hash function $h_j(t)$. We denote the computation of hashes as `MinHashMap'. Specifically, MinHashMap is defined as Algorithm \ref{alg:minhashemit}.

To estimate $\text{Jac}(w_1, w_2)$ using this version of the scheme, we simply count the number of hash functions for which $g(w_1)=g(w_2)$, and divide by $k$ to get an estimate of $\text{Jac}(w_1, w_2)$. By the multiplicative Chernoff bound for sums of 0-1 random variables as seen in Theorem \ref{expgood}, setting $k = O(1/\epsilon)$ will ensure that w.h.p. a similarity score that is above $\epsilon$ has relative error no more than $\delta$. Qualitatively, this theorem is the same as given in \citep{minhash} (where MinHash is introduced) and we do not claim the following as new contribution, however, we include it for completeness. More rigorously, 
\begin{theorem} \label{minhashgood}
Fix any two words $x$ and $y$ having $\text{Jac}(x,y) \geq \epsilon$. Let $X_1, X_2, \ldots, X_k$ represent indicators for $\{g_1(x)=g_1(y),\ldots, g_k(x)=g_k(y)\}$ and $X = \sum_{i=1}^{k} X_i$. For any $1> \delta > 0$ and $k = c/\epsilon$, we have
$$\Pr \left[  X/k > (1+\delta)\text{Jac}(x,y) \right] \leq \left(\frac{e^\delta}{(1+\delta)^{(1+\delta)}}\right)^c$$ and
$$\Pr \left[ X/k < (1-\delta)\text{Jac}(x,y) \right] \leq \exp(-c\delta^2/2)$$
\end{theorem}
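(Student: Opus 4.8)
The plan is to mirror the proof of Theorem \ref{expgood} almost verbatim, exploiting the fact that here $X$ is an honest sum of independent Bernoulli variables whose common mean is exactly the quantity we want to estimate. First I would record that for each $j=1,\ldots,k$ the indicator $X_j = \mathbf{1}[g_j(x)=g_j(y)]$ satisfies $E[X_j] = \Pr[g_j(x)=g_j(y)] = \text{Jac}(x,y)$, which is precisely the identity established immediately before the theorem statement via the min-of-union-lies-in-intersection argument. Since the hash functions $h_1,\ldots,h_k$ are drawn independently, the indicators $X_1,\ldots,X_k$ are mutually independent $0/1$ random variables, so $X=\sum_{j=1}^k X_j$ is a sum of independent indicators with $\mu := E[X] = k\,\text{Jac}(x,y) = (c/\epsilon)\,\text{Jac}(x,y)$.

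Second, the hypothesis $\text{Jac}(x,y)\geq\epsilon$ gives $\mu\geq c$. Now apply the two-sided multiplicative Chernoff bound exactly as in Theorem \ref{expgood}. For the upper tail, note that the event $\{X/k > (1+\delta)\text{Jac}(x,y)\}$ is the same as $\{X > (1+\delta)\mu\}$ (just multiply through by $k$ and use $\mu = k\,\text{Jac}(x,y)$), so $\Pr[X/k > (1+\delta)\text{Jac}(x,y)] < (e^\delta/(1+\delta)^{1+\delta})^{\mu}$; since $e^\delta/(1+\delta)^{1+\delta}<1$ for every $\delta>0$, replacing $\mu$ by the smaller value $c$ only weakens the bound, yielding $(e^\delta/(1+\delta)^{1+\delta})^{c}$. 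For the lower tail, the event $\{X/k < (1-\delta)\text{Jac}(x,y)\}$ equals $\{X < (1-\delta)\mu\}$, and the Chernoff bound gives $\Pr[X<(1-\delta)\mu] < \exp(-\mu\delta^2/2) \leq \exp(-c\delta^2/2)$ by the same monotonicity in $\mu$. This establishes both stated inequalities.

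The only point requiring any care — the nearest thing to an obstacle — is justifying that the $X_j$ are genuinely independent: this rests on the $h_j$ being independent hash functions and on each $g_j$ being a deterministic function of $h_j$ restricted to the documents containing the word in question, so distinct indices $j$ produce functionally independent indicators and the Chernoff bound for independent sums applies. Beyond that, the argument is a direct reuse of the Chernoff machinery already deployed for cosine similarity, which is why this statement is included for completeness rather than as a new contribution; one would close by remarking, as before, that a union bound over the ${D\choose 2}$ pairs (taking $c = \Theta(\log D)$) makes the guarantee hold simultaneously for all pairs with $\text{Jac}(x,y)\geq\epsilon$.
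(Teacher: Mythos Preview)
Your proposal is correct and matches the paper's proof essentially line for line: compute $E[X]=k\,\text{Jac}(x,y)\geq c$, rewrite each tail event as $\{X \gtrless (1\pm\delta)E[X]\}$, apply the multiplicative Chernoff bounds, and use $E[X]\geq c$ together with the monotonicity of the bounds to replace the exponent by $c$. The paper is terser (it does not spell out independence or the closing union-bound remark), but the argument is identical.
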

\begin{proof}
We use $X/k$ as the estimator for $\text{Jac}(x,y)$. Note that $E [X] = k \text{Jac}(x,y) = (c/\epsilon) \text{Jac}(x,y) \geq  c$. Now by standard Chernoff bounds we have 
$$\Pr \left[  X/k > (1+\delta)\text{Jac}(x,y) \right] = \Pr \left[  X > (1+\delta)E[X] \right] 
 \leq \left(\frac{e^\delta}{(1+\delta)^{(1+\delta)}}\right)^{E[X]} \leq \left(\frac{e^\delta}{(1+\delta)^{(1+\delta)}}\right)^c$$
Similarly, by the other side of the multiplicative Chernoff bound, we have
$$\Pr \left[ X/k < (1-\delta)\text{Jac}(x,y) \right] \leq \exp(-c\delta^2/2)$$
\end{proof}

The MapReduce implementation of the above scheme takes in documents and for each unique word in the document outputs $k$ hash values. 

\algsetup{indent=2em}
\begin{algorithm}[H]
\caption{MinHashMap$(t, \langle w_1, \ldots, w_L \rangle)$}\label{alg:minhashemit}
\begin{algorithmic} 
\FOR{$i=1$ to $L$}
	\FOR{$j=1$ to $k$}
		\STATE emit $((w_i, j) \rightarrow h_j(t))$
	\ENDFOR
\ENDFOR
\end{algorithmic}
\end{algorithm}

\begin{enumerate}
\item Given document $t$, Map using MinHashMap (Algorithm \ref{alg:minhashemit})
\item Reduce using the min reducer
\item Now that  minhash values for all hash functions are available, for each pair we can simply compute the number of 
hash collisions and divide by the total number of hash functions
\end{enumerate}

\algsetup{indent=2em}
\begin{algorithm}[H]
\caption{MinHashSampleMap$(t, \langle w_1, \ldots, w_L \rangle)$}\label{alg:minhashsampleemit}
\begin{algorithmic} 
\FOR{$i=1$ to $L$}
	\FOR{$j=1$ to $k$}
		\IF{$h_j(t) \leq \frac{c \log(Dk)}{\#(w_i)}$} 
			\STATE emit $((w_i, j) \rightarrow h_j(t))$
		\ENDIF
	\ENDFOR
\ENDFOR
\end{algorithmic}
\end{algorithm}

Recall that a document has at most $L$ words. This naive Mapper will have shuffle size $N L k = O(NL/\epsilon)$, which can be improved upon. After the map phase, for each of the $k$ hash functions, the standard MinReducer is used, which will compute $g_j(w)$. These MinHash values are then simply checked for equality. We modify the initial map phase, and prove that the modification brings down shuffle size while maintaining correctness. The modification is seen in algorithm \ref{alg:minhashsampleemit}, note that $c$ is a small constant we take to be 3.

We now prove that MinHashSampleMap will with high probability Emit the minimum hash value for a given word $w$ and hash function $h$, thus ensuring the steps following MinHashSampleMap will be unaffected.

\begin{theorem} \label{jacworks}
If the hash functions $h_1, \ldots, h_k$ map documents to $[0,1]$ uniform randomly, and $c=3$, then with probability at least $1- \frac{1}{(Dk)^2}$, for all words $w$ and hash functions $h \in \{h_1, \ldots, h_k\}$, MinHashSampleMap will emit the document that realizes $g(w)$.
\end{theorem}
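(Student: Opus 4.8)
The plan is to reduce the statement to a tail bound on the minimum of i.i.d.\ uniform random variables, followed by a union bound. Fix a word $w$ and a hash function $h_j$, and let $t^\star$ be the document containing $w$ that achieves $g_j(w) = \min\{h_j(t) : t \ni w\}$. Inspecting MinHashSampleMap, the document $t^\star$ is emitted for the key $(w,j)$ precisely when $h_j(t^\star) \le \frac{c\log(Dk)}{\#(w)}$, i.e.\ when $g_j(w) \le \frac{c\log(Dk)}{\#(w)}$. So the event we must control is exactly $\{g_j(w) > \frac{c\log(Dk)}{\#(w)}\}$, and if this never happens for any $(w,j)$ then every subsequent min-reduce step sees the true minimum and the scheme is unaffected.

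Next I would compute that probability. Since $h_j$ maps documents to $[0,1]$ uniformly at random and independently, the $\#(w)$ values $\{h_j(t) : t \ni w\}$ are i.i.d.\ $\mathrm{Unif}[0,1]$, so $g_j(w)$ is the minimum of $\#(w)$ such variables and $\Pr[g_j(w) > s] = (1-s)^{\#(w)}$ for any $s \in [0,1]$. Taking $s = \frac{c\log(Dk)}{\#(w)}$: if $s > 1$ the threshold exceeds every hash value and $t^\star$ is emitted with certainty, so that case is trivial; otherwise, using $1-s \le e^{-s}$,
\[
\Pr\!\left[g_j(w) > \frac{c\log(Dk)}{\#(w)}\right] = \left(1 - \frac{c\log(Dk)}{\#(w)}\right)^{\#(w)} \le e^{-c\log(Dk)} = (Dk)^{-c}.
\]

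Finally I would union-bound over all $D$ words and all $k$ hash functions — a total of $Dk$ key pairs — to get a failure probability at most $Dk \cdot (Dk)^{-c} = (Dk)^{1-c}$, which equals $(Dk)^{-2}$ for $c = 3$. Hence with probability at least $1 - (Dk)^{-2}$, every key $(w,j)$ receives its minimizing document, as claimed. I do not anticipate a real obstacle here; the only points needing a word of care are the edge case where $\frac{c\log(Dk)}{\#(w)} > 1$ (handled trivially as above) and making explicit that the independence/uniformity of $h_j$ is what licenses treating $g_j(w)$ as a minimum of i.i.d.\ uniforms.
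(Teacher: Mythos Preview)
Your proposal is correct and follows essentially the same argument as the paper: bound $\Pr[g_j(w) > c\log(Dk)/\#(w)]$ by $(1 - c\log(Dk)/\#(w))^{\#(w)} \le (Dk)^{-c}$ and then union-bound over all $Dk$ word--hash pairs, taking $c=3$. Your version is slightly more careful in spelling out the i.i.d.\ uniform assumption and the trivial edge case $c\log(Dk)/\#(w) > 1$, but the method is identical.
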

\begin{proof}
Fix a word $w$ and hash function $h$ and let $z = \min_{t |w \in t} h(t)$. Now the probability that
MinHashSampleMap will not emit the document that realizes $g(w)$ is 
$$ \Pr \left[ z > \frac{c \log(Dk)}{\#(w)} \right]  = \bigg(1-\frac{c \log(Dk)}{\#(w)}\bigg)^{\#(w)} \leq e^{-c \log(Dk)}  = \frac{1}{(Dk)^c}$$
Thus for a single $w$ and $h$ we have shown MinHashSampleMap will w.h.p. emit the hash that realizes the MinHash. To show the same
result for \textit{all} hash functions and words in the dictionary, 
set $c=3$ and use union bound to get a $(\frac{1}{Dk})^2$ bound on the probability of error for \textit{any} $w_1, \ldots, w_D$ and $h_1, \ldots, h_k$.  
\end{proof}

Now that we have correctness via theorems \ref{minhashgood} and \ref{jacworks}, we move onto calculating the shuffle size for MinHashSampleMap.
\begin{theorem}
MinHashSampleMap has expected shuffle size $O(D k \log(Dk)) = O((D/\epsilon) \log(D/\epsilon))$.
\end{theorem}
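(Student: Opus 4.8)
The plan is to bound the expected shuffle size directly by linearity of expectation, exactly in the spirit of the proof of Theorem~\ref{smallshuffle}. The shuffle size is the total number of key--value pairs emitted by \texttt{MinHashSampleMap} over all input documents. For a fixed word $w_i$, a fixed document $t$ containing $w_i$, and a fixed hash index $j$, the pair $((w_i,j)\to h_j(t))$ is emitted precisely when $h_j(t)\le \frac{c\log(Dk)}{\#(w_i)}$, an event of probability $\min\!\left(1,\frac{c\log(Dk)}{\#(w_i)}\right)\le \frac{c\log(Dk)}{\#(w_i)}$ since $h_j(t)$ is uniform on $[0,1]$. So the expected shuffle size is at most
$$\sum_{i=1}^{D}\ \sum_{t\,:\,w_i\in t}\ \sum_{j=1}^{k} \frac{c\log(Dk)}{\#(w_i)}.$$

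The key step is the observation that the inner sum over documents telescopes: the number of documents containing $w_i$ is exactly $\#(w_i)$, so $\sum_{t:\,w_i\in t}\frac{c\log(Dk)}{\#(w_i)} = c\log(Dk)$, independently of $\#(w_i)$ — this is the same dampening phenomenon that drives the cosine result. Summing over the $k$ hash functions and the $D$ words then gives the bound $D\,k\,c\log(Dk) = O(Dk\log(Dk))$, and substituting $k = O(1/\epsilon)$ yields $O((D/\epsilon)\log(D/\epsilon))$ as claimed.

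Finally, to match the phrasing used elsewhere in the paper (and the ``with high probability'' guarantees), I would note that the shuffle size is a sum of independent indicator random variables with the above expectation, so a multiplicative Chernoff bound of the same form as in Theorem~\ref{expgood} shows this bound is attained with high probability, not merely in expectation. I do not anticipate a real obstacle here; the only point requiring a word of care is the clipping at probability $1$ when $\#(w_i)$ is smaller than $c\log(Dk)$, which is harmless since we only ever use the upper bound $\frac{c\log(Dk)}{\#(w_i)}$ on the emission probability, and this clipping only decreases the shuffle size.
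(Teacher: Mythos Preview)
Your proposal is correct and follows essentially the same argument as the paper: both compute the expected number of emissions by linearity of expectation, noting that for each word $w_i$ and each hash function the sum over the $\#(w_i)$ documents containing $w_i$ of the emission probability $\frac{c\log(Dk)}{\#(w_i)}$ equals $c\log(Dk)$, yielding $Dkc\log(Dk)$ in total. Your additional remarks on the clipping at probability $1$ and the high-probability concentration are sound extras that the paper omits (the theorem only claims an expected bound).
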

\begin{proof}
Simply adding up the expectations for the emissions indicator variables, we see the shuffle size is bounded by:
$$\sum_{h \in \{h_1,\ldots, h_k\}} \sum_{w \in \{w_1,\ldots, w_D\}} \frac{c \log(Dk)}{\#(w)} \#(w) = D k c \log(Dk)$$
Setting $c=3$ and $k = 1/\epsilon$ gives the desired bound.
\end{proof}

All of the reducers used in our algorithms are associative and commutative operations (sum and min), and
therefore can be combined for optimization. Our results do not change qualitatively when combiners are used, except for one case.
A subtle point arises in our claim for improving MinHash. If we combine with $m$ mappers, then the naive MinHash
implementation will have a shuffle size of $O(mDk)$ whereas DISCO provides a shuffle size of $O(Dk \log(Dk))$. Since
$m$ is usually set to be a very large constant (one can easily use 10,000 mappers in a standard Hadoop implementation), 
removing the dependence on $m$ is beneficial. For the other similarity measures, combining the Naive mappers can only bring down the shuffle size to $O(m D^2)$, which DISCO improves upon asymptotically by obtaining a bound of $O(DL/\epsilon \log(D))$ without even combining, so combining will help even more. In practice, DISCO can be easily combined, ensuring superior performance both theoretically and empirically.

\section{Cosine Similarity in Streaming Model} 
\label{sec:streaming}
We briefly depart from the MapReduce framework and instead work in the `Streaming' framework.
In this setting, data is streamed dimension-by-dimension through a single machine that can at any time answer
queries of the form ``what is the similarity between points $x$ and $y$ considering all the input so far?".
The main  performance measure is how much memory the machine uses and
queries will be answered in constant time. We describe the algorithm only for cosine similarity,
and an almost identical algorithm will work for dice, and overlap similarity.

Our algorithm will be very similar to the mapreduce setup, but in place of emitting pairs to be shuffled
for a reduce phase, we instead insert them into a hash map $H$, keyed by pairs of words, with each entry holding a bag of emissions. $H$ is used to track the emissions by storing them in a bag associated with the pair. Since all data streams through a single machine, for any
word $x$, we can keep a counter for $\#(x)$. This will take $D$ counters worth of memory, but as we will
see in Theorem 1 this memory usage will be dominated by the size of $H$. Each emission is
decorated with the probability of emission. There are two operations to be
described: the update that occurs when a new dimension (document) arrives, and the constant time algorithm
used to answer similarity queries.

\textbf{Update.} On an update we are given a document. For each pair of words $x,y$ in the document, with independent coin flips of probability $q =  \frac{p}{\epsilon} \frac{1}{ \sqrt{\#(x) \#(y)}}$ we lookup the bag associated with $(x,y)$ in $H$ and insert $q$ into it. It is important to note that
the emission is being done with probability $q$ and $q$ is computed using the \textit{current values} of $\#(x)$ and $\#(y)$. Thus if a query comes after this update, we must take into account all new information. This is done via subsampling and is explained shortly. It remains to show how
to use these probabilities to answer queries.

\textbf{Query.} We now describe how to answer the only query. Let the query be for the similarity between words $x$ and $y$.
Recall at this point we know both $\#(x)$ and $\#(y)$ exactly, for the data seen so far.
We lookup the pair $(x,y)$ in $H$, and grab the associated bag of emissions. 
Recall from above that emission is decorated with the probability of emission $q_i$ for the $i$'th 
entry in the bag. Unfortunately $q_i$ will be larger than we need it to be, since it was computed
at a previous time, when fewer occurrences of $x$ and $y$ had happened. To remedy this,
we independently subsample each of the emissions for the pair $x,y$ with coin flips of probability
$ \frac{p}{\epsilon} \frac{1}{q_i \sqrt{\#(x) \#(y)}}$. For each pair $x,y$ seen in the input, there will be
exactly 
$$q_i \frac{p}{\epsilon}  \frac{1}{q_i \sqrt{\#(x) \#(y)}} = \frac{p}{\epsilon} \frac{1}{\sqrt{\#(x) \#(y)}}$$
probability of surviving emission \textit{and} the subsampling. Finally, since the pair $x,y$ is seen
exactly $\#(x,y)$ times, the same estimator used in Theorem 2 will have 
expectation equal to $\cos(x,y)$. Furthermore, since before subsampling we output in expectation more pairs than CosineSampleEmit, 
the Chernoff bound of Theorem 2 still holds. Finally, to show that $H$ cannot grow too large, 
we bound its size in Theorem \ref{thm:streamingworks}.

\begin{theorem} \label{thm:streamingworks}
The streaming algorithm uses at most
$O(D L \lg(N)  \log(D) / \epsilon)$ memory.
\end{theorem}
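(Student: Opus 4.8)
The plan is to reduce the claim to a bound on the expected total number of emissions ever stored in the hash map $H$, since everything else is lower order. The $D$ per-word counters maintaining $\#(x)$ cost $O(D)$, and each entry inserted into $H$ is one emission together with its emission probability (equivalently, with the two integers $\#(x),\#(y)\le N$ that were current when it was inserted), i.e. $O(1)$ memory units per emission. Query-time subsampling never creates entries — at most it deletes them — so the peak size of $H$ is at most the total number $X$ of emissions produced over the whole stream. Moreover $X$ is a sum of independent $0/1$ coin flips, one for each (unordered pair, document) incidence, so once $E[X]$ is bounded the multiplicative Chernoff bound used in Theorem~\ref{expgood} immediately upgrades the expectation bound to the same bound with high probability.

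To bound $E[X]$, expand by linearity over documents: when a document $t$ with $|t|\le L$ is processed, each pair $\{x,y\}\subseteq t$ is emitted with probability $\frac{p}{\epsilon}\frac{1}{\sqrt{\#(x)_t\,\#(y)_t}}$, where $\#(x)_t$ denotes the number of documents containing $x$ seen up to and including $t$. Applying AM--GM to $\{1/\#(x)_t,\,1/\#(y)_t\}$ exactly as in the proof of Theorem~\ref{smallshuffle}, and using that each word of $t$ lies in at most $L-1$ pairs inside $t$, one obtains
\[
E[X]\;\le\;\frac{p}{\epsilon}\,L\sum_{x}\ \sum_{t\ni x}\frac{1}{\#(x)_t}.
\]
For a fixed word $x$, as $t$ ranges over the documents containing $x$ in stream order the quantity $\#(x)_t$ takes the values $1,2,\ldots,\#(x)$, so the inner sum is the harmonic number $H_{\#(x)}=O(\log\#(x))=O(\lg N)$, using $\#(x)\le N$. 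Summing over the $D$ words and substituting $p=2\log D$ yields $E[X]=O\!\big(\tfrac{p}{\epsilon}DL\lg N\big)=O\!\big(DL\,\lg(N)\,\log(D)/\epsilon\big)$, which is exactly the stated memory bound.

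The only genuinely delicate point — and the precise reason the streaming bound is a $\lg N$ factor worse than the MapReduce shuffle size of Theorem~\ref{smallshuffle} — is the staleness of the counts: in the streaming setting a co-occurrence of $x,y$ is emitted with a probability governed by the \emph{current} (hence smaller) values of $\#(x),\#(y)$, not their final values, so the expected number of emissions for that pair grows from $\#(x,y)\frac{p}{\epsilon}\frac{1}{\sqrt{\#(x)\#(y)}}$ to a harmonic-type sum $\approx\frac{p}{\epsilon}\sum_m \tfrac1m$. The observation that tames this is simply that at the $m$-th document containing $x$ we have $\#(x)_t=m$, which converts the blow-up into the clean $\sum_{m=1}^{\#(x)}1/m$ above; everything else — independence of the coin flips, the Chernoff upgrade to high probability, the $O(D)$ counter overhead, and the $O(1)$-word cost of recording each emission's probability — is routine bookkeeping.
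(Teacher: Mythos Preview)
Your argument is correct and is in fact cleaner than the paper's. The paper does not compute the expectation directly; instead it builds an auxiliary dataset $C'$ in which each word $x$ is split into $\lfloor\lg\#(x)\rfloor+1$ time-segmented copies $x_1,\ldots,x_{\lfloor\lg\#(x)\rfloor+1}$ with $\#(x_i)=2^{i-1}$, shows that the streaming emission probability at time $t$ is dominated by the ``all-knowing'' CosineSampleEmit probability for the corresponding pair of copies in $C'$, and then invokes the shuffle-size bound of Theorem~\ref{smallshuffle} on $C'$ (whose dictionary has size $O(D\lg N)$) to conclude. Your route replaces this dyadic reduction by a one-line observation: after AM--GM, the contribution of word $x$ over the stream is $\sum_{m=1}^{\#(x)}1/m=O(\lg N)$, which is exactly the harmonic sum the paper's geometric bucketing is approximating. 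What the paper's approach buys is modularity---it reuses Theorem~\ref{smallshuffle} as a black box---at the cost of the auxiliary construction and a somewhat delicate pointwise domination inequality; what your approach buys is directness and an essentially self-contained three-line calculation. Both yield the same $O(DL\lg(N)\log(D)/\epsilon)$ bound, and both admit the same Chernoff upgrade to high probability.
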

\begin{proof}
We only need to bound the size of $H$.
Consider a word $x$ and all of its occurrences in documents $t_1, \ldots, t_{\#(x)}$ at final
time (i.e. after all $N$ documents have been processed).
We conceptually and only for this analysis construct
a new larger dataset $C'$ where each word $x$ is removed and its occurrences 
are replaced \textit{in order} with $\lfloor \lg\#(x) \rfloor + 1$ new words
$x_1, \ldots, x_{\lfloor \lg\#(x) \rfloor + 1}$, so we are effectively segmenting (in time) the occurrences of $x$. 
With this in mind, we construct $C'$ so that each $x_i$ will replace $2^{i-1}$ occurrences of $x$, in time order.
i.e. we will have $\#(x_i) = 2^{i-1}$.

Note that our streaming algorithm updates the counters for words with every update.
Consider what happens if instead of updating every time, the streaming algorithm somehow in advance
knew and used the final $\#(x)$ values after all documents have been processed.
We call this the `all-knowing' version. The size of $H$
for such an all-knowing algorithm is the same as the shuffle size for the DISCO sampling scheme
analyzed in Theorem 2, simply because there is a bijection between 
the emits of CosineSampleEmit and inserts into $H$ with exactly the same coin flip probability.
We now use this observation and $C'$.

We show that the memory used by $H$ when our algorithm is run on $C'$ 
with the all-knowing counters
dominates (in expectation) the size of $H$ for the original dataset in the streaming model, thus achieving
the claimed bound in the current theorem statement.

Let PrHashMapInsert($x,y$) denote the probability of inserting the pair $x,y$ when we run the all-knowing version of 
the streaming algorithm on $C'$.  
Let PrStreamEmit($x,y,a,b$) denote the probability of emitting the pair $x,y$ in
the streaming model with input $C$, after observing $x,y$ exactly $a,b$ times, respectively. With these definitions we have
$$\text{PrStreamEmit}(x,y,a,b) = \frac{p}{\epsilon} \frac{1} {\sqrt{ab}}$$
$$ \leq \frac{p}{\epsilon} \frac{1} {2^{\lfloor \lg a \rfloor} 2^{\lfloor \lg b \rfloor} }
\leq \frac{p}{\epsilon} \frac{1} {\sqrt{\#(x_{\lfloor \lg a \rfloor})} \sqrt{\#(x_{\lfloor \lg b \rfloor})} }$$
$$= \text{PrHashMapInsert}(x_{\lfloor \lg a \rfloor+1}, y_{\lfloor \lg b \rfloor+1})$$
The first inequality holds by properties of the floor function. The second inequality holds by definition of $C'$.
The dictionary size for $C'$ is O($D\lg (N)$) where $D$ is the original dictionary size for $C$. Using the same
analysis of Theorem 2, the shuffle size for $C'$
is at most $O(D L \lg(N)  \log(D) / \epsilon)$ and therefore so is the size of $H$ for the all-knowing algorithm run on $C'$,
and by the analysis above, so is the hash map size for the original dataset $C$.
\end{proof}

\section{Correctness and Shuffle Size Proofs for other Similarity Measures}
\label{sec:correct}

\subsection{Overlap Similarity}
Overlap similarity follows the same pattern as we used for cosine similarity, thus we only
explain the parts that are different. The emit function changes to Algorithm \ref{alg:overlapsampleemit}.

\algsetup{indent=2em}
\begin{algorithm}[h!]
\caption{OverlapSampleEmit$(w_1, w_2)$ - $p/\epsilon$ is the oversampling parameter} \label{alg:overlapsampleemit}
\begin{algorithmic} [4]
\STATE With probability $$\frac{p}{\epsilon} \frac{1}{\min(\#(w_1), \#(w_2))}$$ emit $((w_1, w_2) \rightarrow 1)$
\end{algorithmic}
\end{algorithm}

The correctness proof is nearly identical to cosine similarity so we do not restate it. The shuffle size for
OverlapSampleEmit is given by the following theorem.
\begin{theorem} \label{overlapsmallshuffle}
The expected shuffle size for OverlapSampleEmit is $O(D L \log(D) / \epsilon)$.
\end{theorem}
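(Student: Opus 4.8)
The plan is to follow the template of Theorem~\ref{smallshuffle} essentially verbatim, with the only change being in how we bound the per-pair emission probability. First I would write the expected shuffle size as the sum over all unordered pairs of words and all of their co-occurrences of the emission probability:
$$\sum_{i=1}^{D} \sum_{j=i+1}^D \sum_{k=1}^{\#(w_i, w_j)} \Pr[\text{OverlapSampleEmit}(w_i, w_j)] = \frac{p}{\epsilon}\sum_{i=1}^{D} \sum_{j=i+1}^D \frac{\#(w_i, w_j)}{\min(\#(w_i), \#(w_j))}.$$

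Next, in place of the AM--GM step used for cosine, I would use the elementary inequality $\frac{1}{\min(a,b)} = \max\!\left(\frac{1}{a}, \frac{1}{b}\right) \leq \frac{1}{a} + \frac{1}{b}$, so that the sum is at most $\frac{p}{\epsilon}\sum_{i<j} \#(w_i,w_j)\left(\frac{1}{\#(w_i)} + \frac{1}{\#(w_j)}\right) \leq \frac{p}{\epsilon}\sum_{i=1}^D \frac{1}{\#(w_i)}\sum_{j=1}^D \#(w_i,w_j)$, where the last step regroups the two terms of each pair by common index. Then, exactly as in Theorem~\ref{smallshuffle}, the fact that $w_i$ occurs in $\#(w_i)$ documents of length at most $L$ gives $\sum_{j=1}^D \#(w_i,w_j) \leq L\,\#(w_i)$, so the whole expression collapses to $\frac{p}{\epsilon}\sum_{i=1}^D L = \frac{p}{\epsilon} L D$. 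Substituting $p = 2\log D$ yields $O(DL\log(D)/\epsilon)$, and a Chernoff-bound remark (as in Theorem~\ref{smallshuffle}) shows this holds with high probability, not just in expectation.

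I do not anticipate a real obstacle here: the argument is structurally identical to the cosine case and the substitution of $\frac{1}{\min(a,b)} \le \frac1a + \frac1b$ for the AM--GM bound is the only new ingredient, and it is trivial (indeed it is looser than AM--GM only by a constant factor, which is absorbed). The one point worth stating carefully is the regrouping step $\sum_{i<j}\#(w_i,w_j)(\tfrac{1}{\#(w_i)}+\tfrac{1}{\#(w_j)}) = \sum_{i\neq j}\#(w_i,w_j)/\#(w_i) \le \sum_i \tfrac{1}{\#(w_i)}\sum_j \#(w_i,w_j)$, so that the double sum is correctly accounted for; after that the bound on $\sum_j \#(w_i,w_j)$ finishes it.
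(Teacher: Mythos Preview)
Your proposal is correct and matches the paper's own proof essentially line for line: the paper also replaces the AM--GM step by the trivial bound $\tfrac{1}{\min(a,b)}\le \tfrac{1}{a}+\tfrac{1}{b}$, regroups the double sum, and then uses $\sum_j \#(w_i,w_j)\le L\,\#(w_i)$ to collapse it to $O(DL\log(D)/\epsilon)$. Your careful accounting of the regrouping step is in fact slightly tighter (the paper carries an extra constant factor of~2), but the argument is the same.
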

\begin{proof}
The expected contribution from each pair of words will constitute the shuffle size:
$$ \sum_{i=1}^{D} \sum_{j=i+1}^D \frac{p}{\epsilon} \frac{\#(w_i, w_j)}{\min(\#(w_i), \#(w_j))}  $$
$$\leq \frac{p}{ \epsilon} \sum_{i=1}^{D} \sum_{j=i+1}^D \#(w_i, w_j)( \frac{1}{\#(w_i)} + \frac{1}{\#(w_j)}) $$
$$\leq \frac{2p}{\epsilon} \sum_{i=1}^{D} \frac{1}{\#(w_i)} \sum_{j=1}^D \#(w_i, w_j) $$
$$\leq \frac{2p}{\epsilon} \sum_{i=1}^{D} \frac{1}{\#(w_i)} L \#(w_i) = \frac{2p}{\epsilon} LD = O(DL \log(D)/\epsilon)$$

The first inequality holds trivially. The
last inequality holds because $w_i$ can co-occur with at most $\#(w_i) L$ other words. 
It is easy to see via Chernoff bounds that the above shuffle size is obtained with high probability.
\end{proof}

\subsection{Dice Similarity}

Dice similarity follows the same pattern as we used for cosine similarity, thus we only
explain the parts that are different. The emit function changes to Algorithm \ref{alg:dicesampleemit}.

\algsetup{indent=2em}
\begin{algorithm}[h!]
\caption{DiceSampleEmit$(w_1, w_2)$  - $p/\epsilon$ is the oversampling parameter} \label{alg:dicesampleemit}
\begin{algorithmic} [4]
\STATE With probability $$\frac{p}{\epsilon} \frac{2}{\#(w_1) + \#(w_2)}$$ emit $((w_1, w_2) \rightarrow 1)$
\end{algorithmic}
\end{algorithm}

The correctness proof is nearly identical to cosine similarity so we do not restate it. The shuffle size for
DiceSampleEmit is given by the following theorem.
\begin{theorem} \label{dicesmallshuffle}
The expected shuffle size for DiceSampleEmit is $O(D L \log(D) / \epsilon)$.
\end{theorem}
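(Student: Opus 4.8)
The plan is to mirror exactly the argument used for Theorem \ref{smallshuffle} (cosine) and Theorem \ref{overlapsmallshuffle} (overlap), since DiceSampleEmit has the same structure: every co-occurrence of a pair $(w_i,w_j)$ triggers an independent coin flip whose success probability is $\frac{p}{\epsilon}\frac{2}{\#(w_i)+\#(w_j)}$. So I would first write the expected shuffle size as the sum over all pairs and all co-occurrences of this emission probability, which collapses to
$$\sum_{i=1}^{D}\sum_{j=i+1}^{D}\frac{p}{\epsilon}\frac{2\,\#(w_i,w_j)}{\#(w_i)+\#(w_j)}.$$

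The one step that differs from the cosine proof is the elementary inequality used to decouple the denominator: instead of AM-GM applied to $\{1/\#(w_i),1/\#(w_j)\}$, I would use $\frac{2}{a+b}\le \frac{1}{a}+\frac{1}{b}$ for positive $a,b$, which is immediate from $2ab\le (a+b)^2$ (equivalently $0\le (a-b)^2$). Applying this with $a=\#(w_i)$, $b=\#(w_j)$ bounds the sum above by $\frac{p}{\epsilon}\sum_{i<j}\#(w_i,w_j)\big(\tfrac{1}{\#(w_i)}+\tfrac{1}{\#(w_j)}\big)$.

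From here the argument is identical to the earlier proofs: symmetrize to get $\frac{2p}{\epsilon}\sum_{i=1}^{D}\frac{1}{\#(w_i)}\sum_{j=1}^{D}\#(w_i,w_j)$, then use the fact that $w_i$ can co-occur with at most $\#(w_i)L$ other words (each of the $\#(w_i)$ documents containing $w_i$ has at most $L$ words), so $\sum_j \#(w_i,w_j)\le L\,\#(w_i)$. This yields $\frac{2p}{\epsilon}\sum_i \frac{1}{\#(w_i)}L\,\#(w_i)=\frac{2p}{\epsilon}LD$, and substituting $p=2\log D$ gives the claimed $O(DL\log(D)/\epsilon)$. I would close by remarking, as in the other cases, that a Chernoff bound shows this bound holds with high probability, not just in expectation.

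There is essentially no obstacle here — the only thing to be careful about is picking the right elementary inequality for the Dice denominator (the harmonic-type bound $\frac{2}{a+b}\le\frac1a+\frac1b$ rather than AM-GM), after which everything reduces verbatim to the cosine/overlap computation.
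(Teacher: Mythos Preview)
Your proposal is correct and follows essentially the same route as the paper. The only difference is that where you invoke the harmonic-type inequality $\tfrac{2}{a+b}\le\tfrac{1}{a}+\tfrac{1}{b}$, the paper uses the even cruder bound $\tfrac{1}{\#(w_i)+\#(w_j)}\le\tfrac{1}{\#(w_i)}$ (``trivially'') and then enlarges the inner sum to all $j$, arriving directly at $\tfrac{2p}{\epsilon}\sum_i\sum_j \tfrac{\#(w_i,w_j)}{\#(w_i)}$; from that point on the two arguments coincide.
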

\begin{proof}
The expected contribution from each pair of words will constitute the shuffle size:
$$ 2 \sum_{i=1}^{D} \sum_{j=i+1}^D \frac{p}{\epsilon} \frac{\#(w_i, w_j)}{\#(w_i) + \#(w_j)} $$
$$\leq \frac{2p}{\epsilon} \sum_{i=1}^{D} \sum_{j=1}^D  \frac{\#(w_i, w_j) }{\#(w_i)}$$
$$\leq \frac{2p}{\epsilon} \sum_{i=1}^{D} \frac{1}{\#(w_i)} L \#(w_i) = \frac{2p}{\epsilon} LD = O(DL \log(D)/\epsilon)$$

The first inequality holds trivially. The
last inequality holds because $w_i$ can co-occur with at most $\#(w_i) L$ other words. 
It is easy to see via Chernoff bounds that the above shuffle size is obtained with high probability.
\end{proof}

\section{Cross Product}

Consider the case that we have $N$ vectors (and \textit{not} $D$) in $\mathbb{R}^N$ and wish to compute the cross product of all
similarities between a subset $D$ of the $N$ vectors with the remaining $N-D$ vectors. This is different than computing
all ${D \choose 2}$ similarities which has been the focus of the previous sections. 

It is worth a brief mention that the same bounds and almost the same proof of Theorem \ref{smallshuffle}
 hold for this case, with one caveat,
that the magnitudes of all $N$ points must be available to the mappers and reducers. We can mitigate this problem
by annotating each dimension with the magnitudes of points beforehand in a separate MapReduce job. However, this MapReduce
job will have a shuffle size dependent on $N$, so we do not consider it a part of the DISCO scheme.

\section{Experiments} \label{sec:experimental}

We use data from the social networking site Twitter. Twitter is currently a very popular social network platform. Users interact with Twitter through a web interface, instant messaging clients, or sending mobile text messages. Public updates by users are viewable to the world, and a large majority of Twitter accounts are public. These public tweets provide a large real-time corpus of what is happening in the world.

We have two sets of experiments.
The first is a large scale experiment that is in production, to compute similar users in the Twitter follow graph. The second is
an experiment to find similar words. The point of the first experiment is show the scalability of our algorithms
and the point of the second set of experiments is to show accuracy.

\subsection{Similar Users}
\label{largeexperiment}
Consider the problem of finding all pairs of similarities between a subset of $D=10^7$ twitter users.
We would like to find all ${D \choose 2}$ similarities between these users. 
The number of dimensions of each user is larger than $D$, and is denoted $N=5\times 10^8$. 
We used $p/\epsilon = 100$. We have a classifier trained to pick the edges
with most interaction, and thus can limit the sparsity of the dataset with $L=1000$.

We compute these user similarities daily in a production environment using Hadoop and Pig implementations
at Twitter \citep{pig}, with the results supplementing Twitter.com. Using 100 Hadoop nodes we can compute the above similarities
in around 2 hours. In this case, the naive scheme would have shuffle size $NL^2 = 5 \times 10^{14}$, which is infeasible, however it becomes possible with DISCO.

\subsection{Similar Words}

In the second experiment we used all tweets seen by Twitter.com in one day. 
The point in the experiment is to find all pairs of similar words, where words are similar if they appear in the same tweets.
The number of dimensions $N=198,134,530$ in our data is equal to the number of tweets and each tweet is a document with size at most 140 characters, providing a small upper bound for $L$.  These documents are ideal for our framework, since our shuffle size upper bound depends on $L$, which in this case is very small. We used a dictionary of 1000 words advertisers on Twitter are currently targeting
to show Promoted Trends, Trends, and Accounts. We also tried a uniformly random sampled dictionary without a qualitative change in results.

The reason we only used $D=1000$ is because for the purpose of validating our work (i.e. reporting the small errors occurred by our algorithms), we have to compute the true cosine similarities, which means computing true co-occurence for every pair, which is a challenging task computationally. This was a bottleneck only in \textit{our experiments} for this paper, and does not affect users of our algorithms. We ran experiments with $D=10^6$, but cannot report true error since finding the true cosine similarities are too computationally intensive. In this regime however, our theorems guarantee that the results are good with high probability.

\subsection{Shuffle Size vs Accuracy} \label{sec:shufflevsaccuracy}

We have two parameters to tweak to tradeoff accuracy and shuffle size: $\epsilon$ and $p$. However,
since they only occur in our algorithms as the ratio $p/\epsilon$, we simply use that as the tradeoff parameter. The
reason we separated $p$ and $\epsilon$ was for the theorems to go through nicely, but in reality we only see a single
tweaking parameter. 

We increase $p/\epsilon$ exponentially on the x axis and record the ratio of DISCO shuffle size to the naive implementation.
In all cases we can achieve a 90\% reduction in shuffle size from the naive implementation without sacrificing much accuracy, as see in Figures \ref{fig:cosshuffle}, \ref{fig:diceshuffle}, and \ref{fig:overlapshuffle}. The accuracy we report is with respect to true cosine, dice, and overlap similarity.

\begin{figure}[h!]
  \centering
    \includegraphics[width=0.5\textwidth]{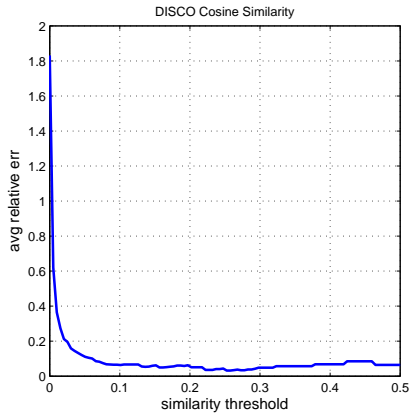}
 \caption{Average error for all pairs with similarity $\geq \epsilon$. DISCO estimated Cosine error decreases for more similar pairs.  Shuffle reduction from naive implementation: 99.39\%. \label{fig:cosallsim}}
\end{figure}
\begin{figure}[h!] 
  \centering
    \includegraphics[width=0.5\textwidth]{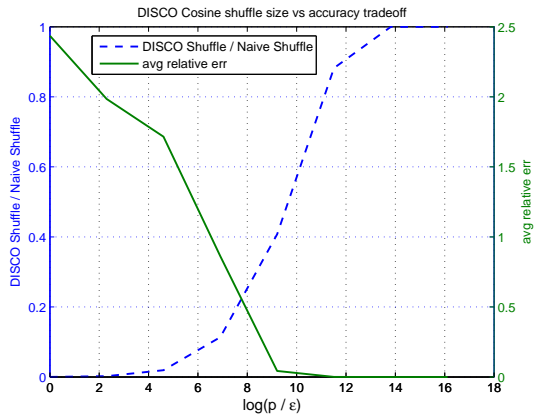}
 \caption{As $p/\epsilon$ increases, shuffle size increases and error decreases. There is no thresholding for highly similar pairs here.
 Ground truth is computed with naive algorithm. When the true similarity is zero, DISCO always also returns zero, so we always get those right. It remains to estimate those pairs with similarity $> 0$, and that is the average relative error for those pairs that we report here. \label{fig:cosshuffle}}
\end{figure}

\begin{figure}[h!]
  \centering
    \includegraphics[width=0.5\textwidth]{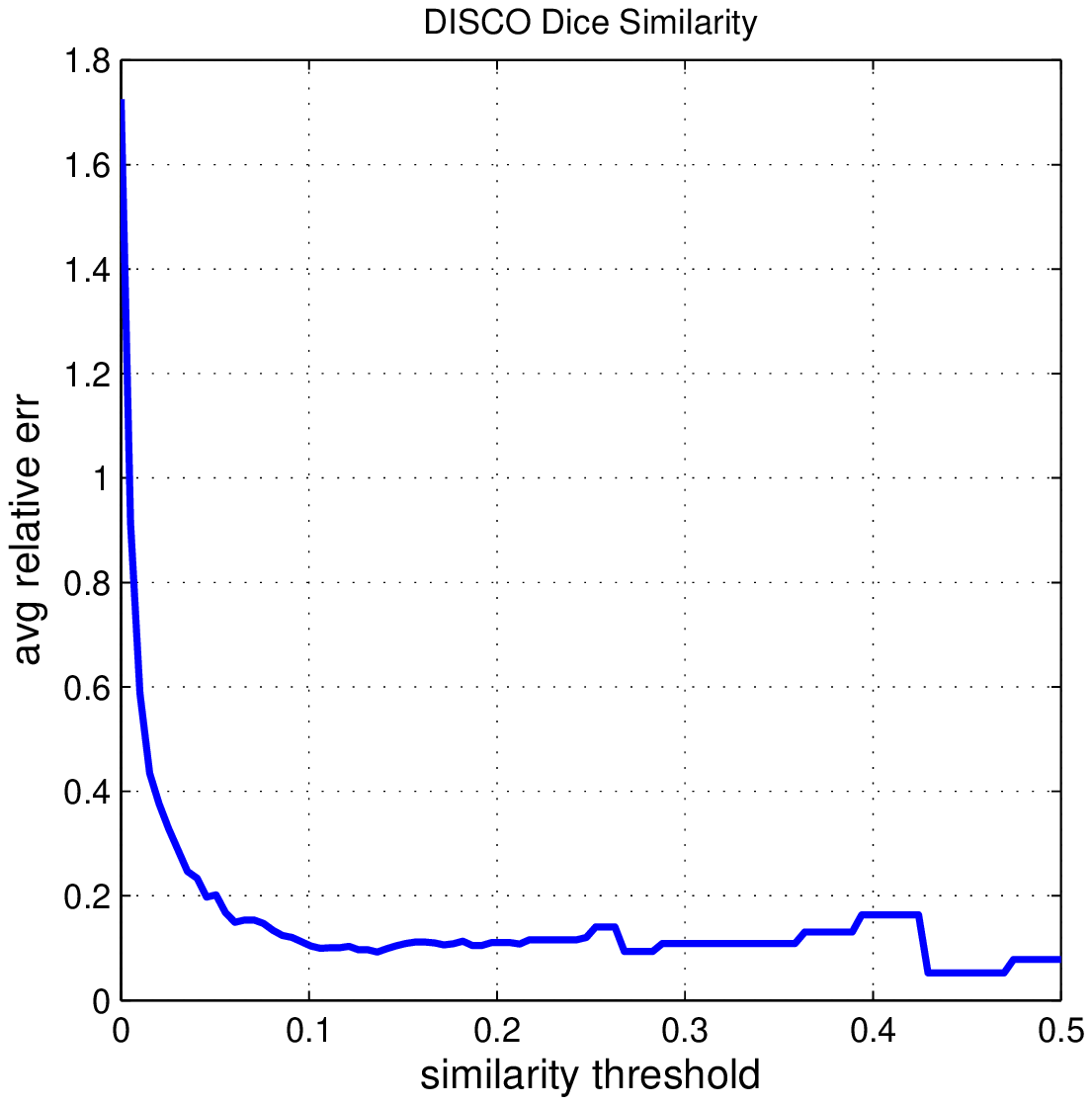}
 \caption{Average error for all pairs with similarity $\geq \epsilon$. DISCO estimated Dice error decreases for more similar pairs. Shuffle reduction from naive implementation: 99.76\%. \label{fig:diceallsim}}
\end{figure}
\begin{figure}[h!]
  \centering
    \includegraphics[width=0.5\textwidth]{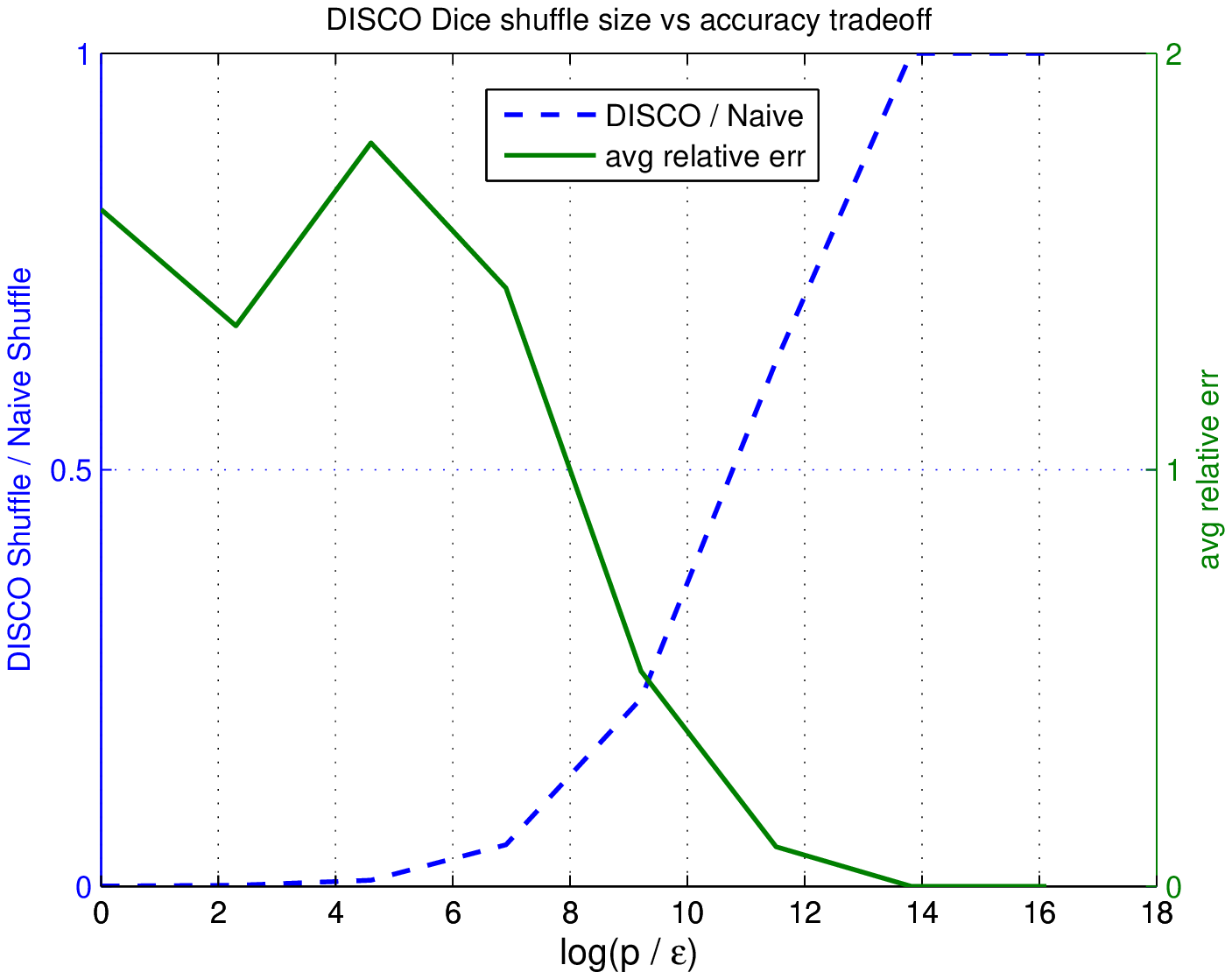}
 \caption{As $p/\epsilon$ increases, shuffle size increases and error decreases. There is no thresholding for highly similar pairs here.\label{fig:diceshuffle}}
\end{figure}

\begin{figure}[h!]
  \centering
    \includegraphics[width=0.5\textwidth]{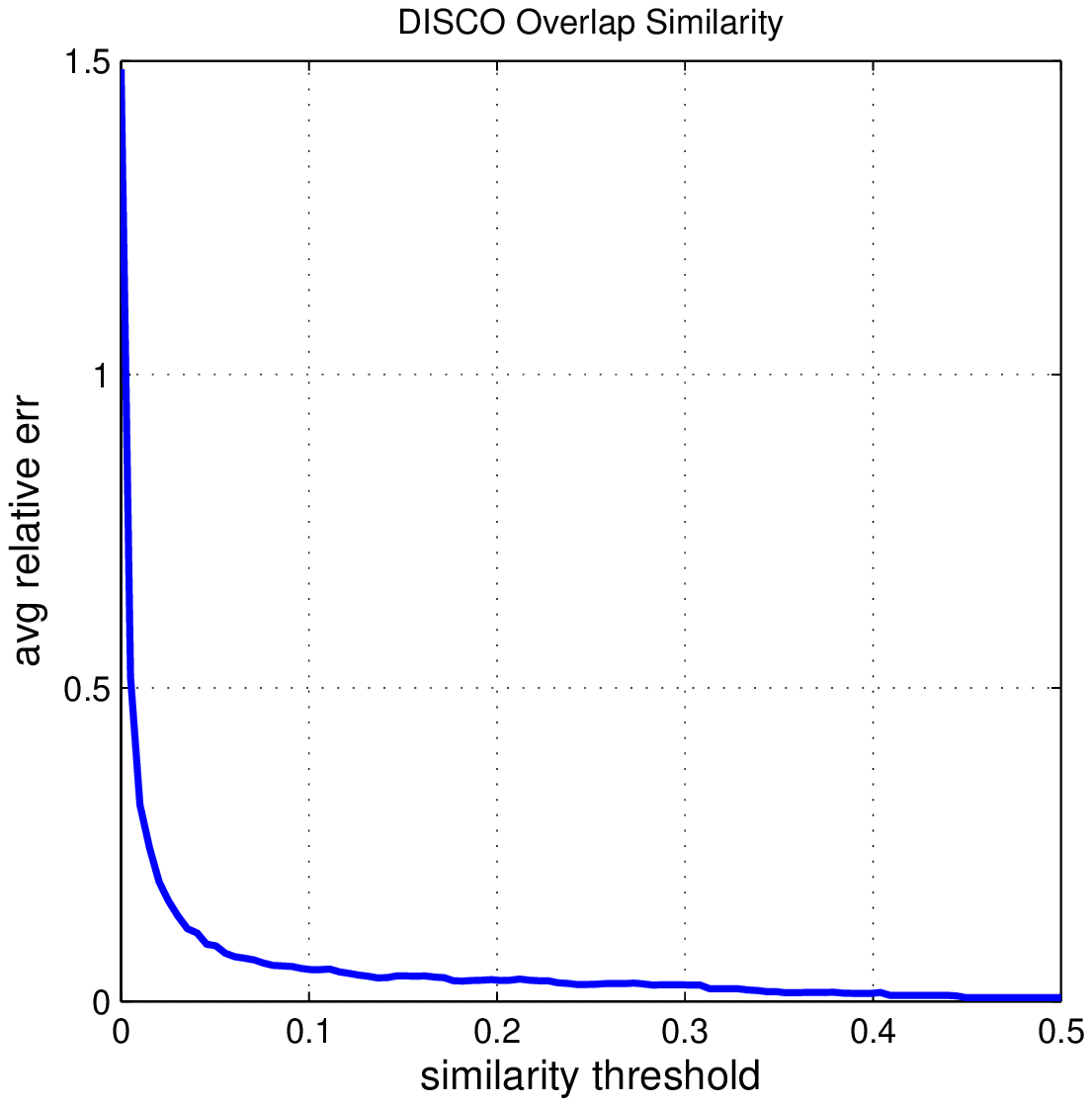}
 \caption{Average error for all pairs with similarity $\geq \epsilon$. DISCO estimated Overlap error decreases for more similar pairs. Shuffle reduction from naive implementation: 97.86\%.\label{fig:overlapallsim}}
\end{figure}
\begin{figure}[h!] 
  \centering
    \includegraphics[width=0.5\textwidth]{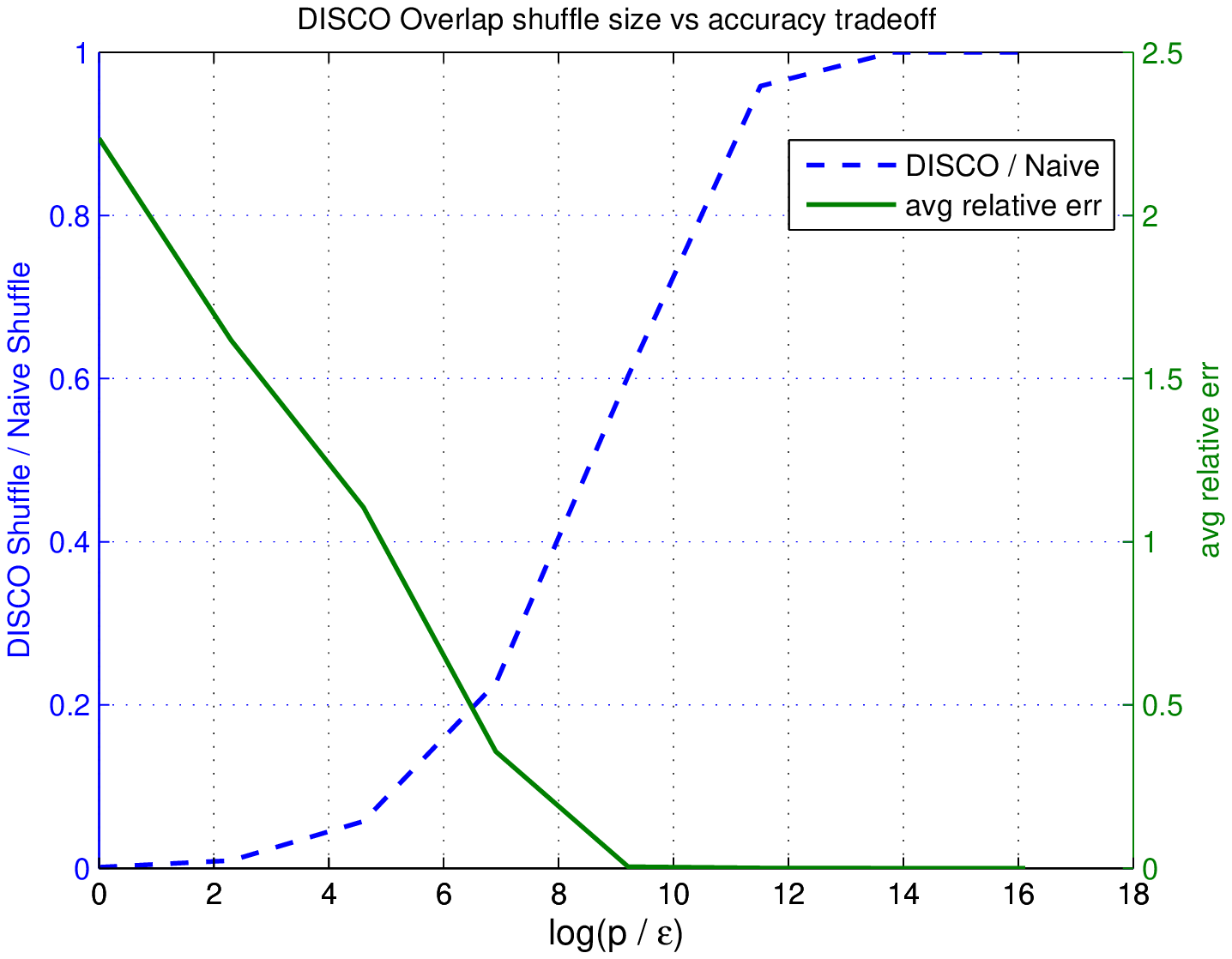}
 \caption{As $p/\epsilon$ increases, shuffle size increases and error decreases. There is no thresholding for highly similar pairs here.\label{fig:overlapshuffle}}
\end{figure}


\begin{figure}[h!]
  \centering
    \includegraphics[width=0.5\textwidth]{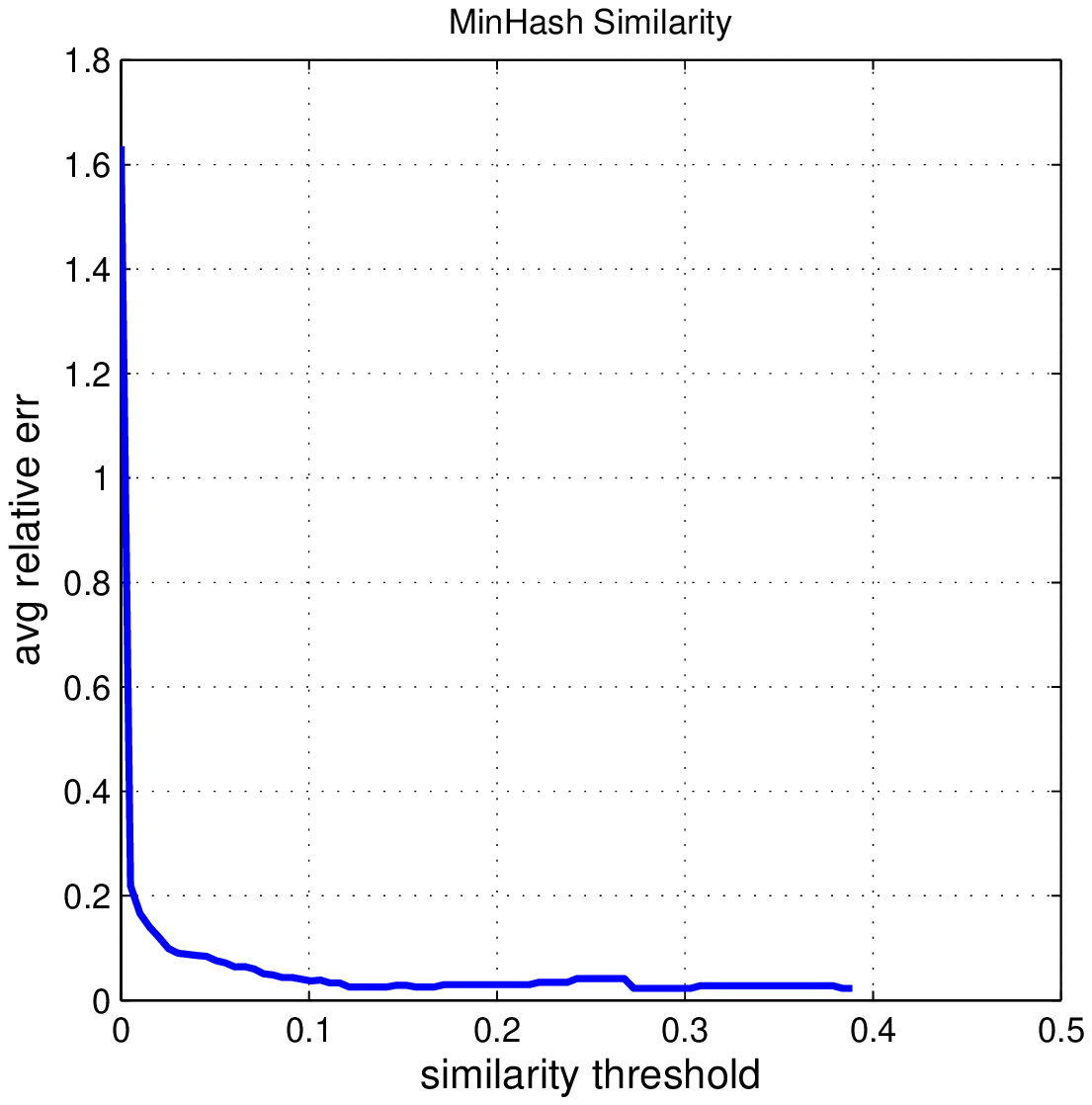}
 \caption{Average error for all pairs with similarity $\geq \epsilon$. MinHash Jaccard similarity error decreases for more similar pairs.
 We are computing error with MinHash here, not ground truth. \label{fig:mhallsim}}
\end{figure}
\newpage

\begin{figure}[h!]
  \centering
    \includegraphics[width=0.5\textwidth]{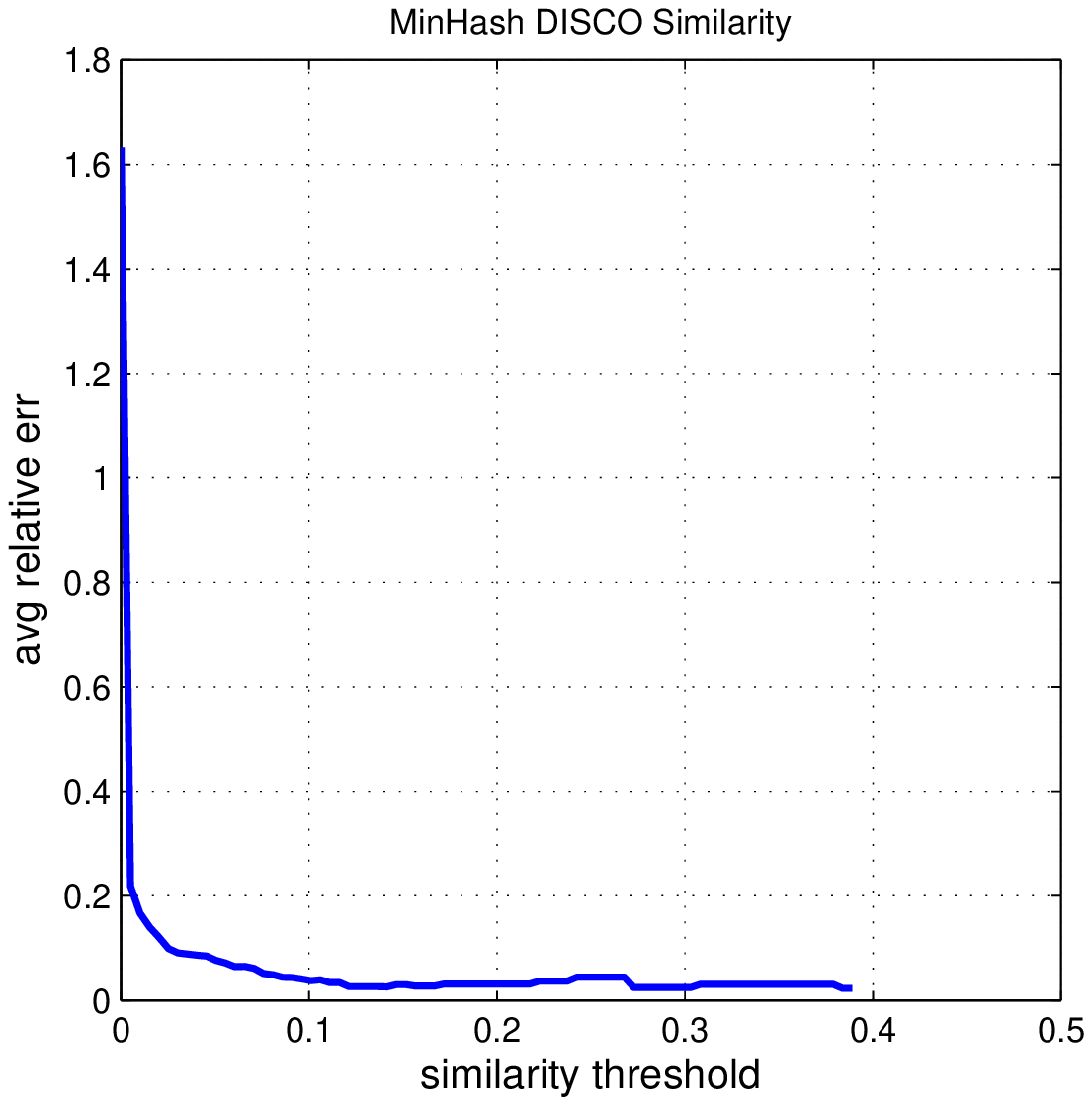}
 \caption{Average error for all pairs with similarity $\geq \epsilon$. DISCO MinHash Jaccard similarity error decreases for more similar pairs.  We are computing error with MinHash here, not ground truth. \label{fig:mhestallsim}}
\end{figure}

\subsection{Error vs Similarity Magnitude}  \label{sec:errorvsepsilon}
All of our theorems report better accuracy for pairs that have higher similarity than otherwise. To see this empirically, we 
plot the average error of all pairs that have true similarity above $\epsilon$. These can be seen in Figures \ref{fig:cosallsim}, \ref{fig:diceallsim}, \ref{fig:overlapallsim}, \ref{fig:mhallsim}, and \ref{fig:mhestallsim}. Note that the reason for 
large portions of the error being constant in these plots is that there are very few pairs with very high similarities, and therefore the
error remains constant while $\epsilon$ is between the difference of two such very high similarity pairs. Further note that
Figure \ref{fig:mhallsim} and \ref{fig:mhestallsim} are so similar because our proposed DISCO MinHash very closely
mimics the original MinHash. This is reflected in the strength of the bound in Theorem \ref{jacworks}.

\section{Conclusions and Future Directions} \label{sec:conclusions}

We presented the DISCO suite of algorithms to compute
all pairwise similarities between very high dimensional sparse vectors.
All of our results are provably independent of dimension, meaning
apart from the initial cost of trivially reading in the data, all subsequent
operations are independent of the dimension, thus the dimension can be very large.

Although we use the MapReduce \citep{mapreduce} and Streaming computation models
to discuss shuffle size and memory, the sampling strategy we use can be
generalized to other frameworks. 
We anticipate the DISCO sampling strategy to be useful whenever one is computing a number between 0 and 1
by taking the ratio of an unknown number (the dot product in our case) by some known number (e.g. $\sqrt{\#(x)\#(y)}$ for cosine similarity) \cite{dimsum}. This is a high-level description, and in order to make it more practical,
 we give five concrete examples, along with proofs, and experiments.

\section{Acknowledgements}

We thank the reviewers for their immensely helpful and thorough comments. We also thank the Twitter Personalization and Recommender Systems team for allowing use of production data. Supported in part by the DARPA xdata program, by grant FA9550-
12-1-0411 from the U.S. Air Force Office of Scientific Research
(AFOSR) and the Defense Advanced Research Projects Agency (DARPA), and
by NSF Award 0915040.

\newpage
\bibliography{disco}

\begin{thebibliography}{29}
\providecommand{\natexlab}[1]{#1}
\providecommand{\url}[1]{\texttt{#1}}
\expandafter\ifx\csname urlstyle\endcsname\relax
  \providecommand{\doi}[1]{doi: #1}\else
  \providecommand{\doi}{doi: \begingroup \urlstyle{rm}\Url}\fi

\bibitem[Abhishek and Hosanagar(2007)]{abhishek2007keyword}
V.~Abhishek and K.~Hosanagar.
\newblock Keyword generation for search engine advertising using semantic
  similarity between terms.
\newblock In \emph{EC 2007}, pages 89--94. ACM, 2007.

\bibitem[Arasu et~al.(2006)Arasu, Ganti, and Kaushik]{arasu2006efficient}
A.~Arasu, V.~Ganti, and R.~Kaushik.
\newblock Efficient exact set-similarity joins.
\newblock In \emph{VLDB 2006}, pages 918--929. VLDB Endowment, 2006.

\bibitem[Baraglia et~al.(2010)Baraglia, De~Francisci~Morales, and
  Lucchese]{baraglia}
R.~Baraglia, G.~De~Francisci~Morales, and C.~Lucchese.
\newblock Document similarity self-join with mapreduce.
\newblock In \emph{Data Mining (ICDM), 2010 IEEE 10th International Conference
  on}, pages 731--736. IEEE, 2010.

\bibitem[Beeferman and Berger(2000)]{beeferman2000agglomerative}
D.~Beeferman and A.~Berger.
\newblock Agglomerative clustering of a search engine query log.
\newblock In \emph{SIGKDD 2000}, pages 407--416. ACM, 2000.

\bibitem[Borthakur(2007)]{hadoop}
D.~Borthakur.
\newblock The hadoop distributed file system: Architecture and design.
\newblock \emph{Hadoop Project Website}, 2007.

\bibitem[Broder(1997)]{minhash}
A.Z. Broder.
\newblock On the resemblance and containment of documents.
\newblock In \emph{CCS 1997}, pages 21--29. IEEE, 1997.

\bibitem[Broder et~al.(1997)Broder, Glassman, Manasse, and
  Zweig]{broder1997syntactic}
A.Z. Broder, S.C. Glassman, M.S. Manasse, and G.~Zweig.
\newblock Syntactic clustering of the web.
\newblock \emph{Computer Networks and ISDN Systems}, 29\penalty0
  (8-13):\penalty0 1157--1166, 1997.

\bibitem[Campagna and Pagh(2012)]{rasmus}
A.~Campagna and R.~Pagh.
\newblock Finding associations and computing similarity via biased pair
  sampling.
\newblock \emph{Knowledge and information systems}, 31\penalty0 (3):\penalty0
  505--526, 2012.

\bibitem[Charikar(2002)]{charikar2002similarity}
M.S. Charikar.
\newblock Similarity estimation techniques from rounding algorithms.
\newblock In \emph{STOC 2002}, pages 380--388. ACM, 2002.

\bibitem[Chaudhuri et~al.(2006)Chaudhuri, Ganti, and
  Kaushik]{chaudhuri2006primitive}
S.~Chaudhuri, V.~Ganti, and R.~Kaushik.
\newblock A primitive operator for similarity joins in data cleaning.
\newblock In \emph{ICDE 2006}, pages 5--5. IEEE, 2006.

\bibitem[Chien and Immorlica(2005)]{chien2005semantic}
S.~Chien and N.~Immorlica.
\newblock Semantic similarity between search engine queries using temporal
  correlation.
\newblock In \emph{WWW 2005}, pages 2--11. ACM, 2005.

\bibitem[Chuang and Chien(2005)]{chuang2005taxonomy}
S.L. Chuang and L.F. Chien.
\newblock Taxonomy generation for text segments: A practical web-based
  approach.
\newblock \emph{ACM TOIS}, 23\penalty0 (4):\penalty0 363--396, 2005.

\bibitem[Dean and Ghemawat(2008)]{mapreduce}
J.~Dean and S.~Ghemawat.
\newblock {MapReduce: Simplified data processing on large clusters}.
\newblock \emph{Communications of the ACM}, 51\penalty0 (1):\penalty0 107--113,
  2008.
\newblock ISSN 0001-0782.

\bibitem[Elsayed et~al.(2008)Elsayed, Lin, and Oard]{elsayed2008pairwise}
T.~Elsayed, J.~Lin, and D.W. Oard.
\newblock Pairwise document similarity in large collections with mapreduce.
\newblock In \emph{ACL 2008}, pages 265--268. Association for Computational
  Linguistics, 2008.

\bibitem[Fagin et~al.(2003)Fagin, Kumar, and Sivakumar]{fagin2003efficient}
R.~Fagin, R.~Kumar, and D.~Sivakumar.
\newblock Efficient similarity search and classification via rank aggregation.
\newblock In \emph{ACM SIGMOD 2003}, pages 301--312. ACM, 2003.

\bibitem[Gates et~al.(2009)Gates, Natkovich, Chopra, Kamath, Narayanamurthy,
  Olston, Reed, Srinivasan, and Srivastava]{gates2009building}
A.F. Gates, O.~Natkovich, S.~Chopra, P.~Kamath, S.M. Narayanamurthy, C.~Olston,
  B.~Reed, S.~Srinivasan, and U.~Srivastava.
\newblock Building a high-level dataflow system on top of map-reduce: the pig
  experience.
\newblock \emph{Proceedings of the VLDB Endowment}, 2\penalty0 (2):\penalty0
  1414--1425, 2009.

\bibitem[Gionis et~al.(1999)Gionis, Indyk, and Motwani]{gionis1999similarity}
A.~Gionis, P.~Indyk, and R.~Motwani.
\newblock Similarity search in high dimensions via hashing.
\newblock In \emph{VLDB 1999}, pages 518--529. Morgan Kaufmann Publishers Inc.,
  1999.

\bibitem[Goel and Munagala(2012)]{ashishperf}
Ashish Goel and Kamesh Munagala.
\newblock Complexity measures for map-reduce, and comparison to parallel
  computing.
\newblock \emph{Manuscript
  \url{http://www.stanford.edu/~ashishg/papers/mapreducecomplexity.pdf}}, 2012.

\bibitem[Gupta et~al.(2013)Gupta, Goel, Lin, Sharma, Wang, and Zadeh]{wtfpaper}
Pankaj Gupta, Ashish Goel, Jimmy Lin, Aneesh Sharma, Dong Wang, and Reza Zadeh.
\newblock Wtf: The who to follow service at twitter.
\newblock \emph{The WWW 2013 Conference}, 2013.

\bibitem[Indyk and Motwani(1998)]{indyk1998approximate}
P.~Indyk and R.~Motwani.
\newblock Approximate nearest neighbors: towards removing the curse of
  dimensionality.
\newblock In \emph{STOC 1998}, pages 604--613. ACM, 1998.

\bibitem[Lin(2009)]{lin2009brute}
J.~Lin.
\newblock Brute force and indexed approaches to pairwise document similarity
  comparisons with mapreduce.
\newblock In \emph{SIGIR 2009}, pages 155--162. ACM, 2009.

\bibitem[Olston et~al.(2008)Olston, Reed, Srivastava, Kumar, and Tomkins]{pig}
C.~Olston, B.~Reed, U.~Srivastava, R.~Kumar, and A.~Tomkins.
\newblock Pig latin: a not-so-foreign language for data processing.
\newblock In \emph{Proceedings of the 2008 ACM SIGMOD international conference
  on Management of data}, pages 1099--1110. ACM, 2008.

\bibitem[Pantel et~al.(2009)Pantel, Crestan, Borkovsky, Popescu, and
  Vyas]{pantel2009web}
P.~Pantel, E.~Crestan, A.~Borkovsky, A.M. Popescu, and V.~Vyas.
\newblock Web-scale distributional similarity and entity set expansion.
\newblock In \emph{EMNLP 2009}. ACL, 2009.

\bibitem[Regelson and Fain(2006)]{regelson2006predicting}
M.~Regelson and D.~Fain.
\newblock Predicting click-through rate using keyword clusters.
\newblock In \emph{Proceedings of the Second Workshop on Sponsored Search
  Auctions}, volume 9623. Citeseer, 2006.

\bibitem[Sahami and Heilman(2006)]{sahami2006web}
M.~Sahami and T.D. Heilman.
\newblock A web-based kernel function for measuring the similarity of short
  text snippets.
\newblock In \emph{WWW 2006}, pages 377--386. ACM, 2006.

\bibitem[Sarawagi and Kirpal(2004)]{sarawagi2004efficient}
S.~Sarawagi and A.~Kirpal.
\newblock Efficient set joins on similarity predicates.
\newblock In \emph{ACM SIGMOD 2004}, pages 743--754. ACM, 2004.

\bibitem[Spertus et~al.(2005)Spertus, Sahami, and
  Buyukkokten]{spertus2005evaluating}
E.~Spertus, M.~Sahami, and O.~Buyukkokten.
\newblock Evaluating similarity measures: a large-scale study in the orkut
  social network.
\newblock In \emph{SIGKDD 2005}, pages 678--684. ACM, 2005.

\bibitem[Suri and Vassilvitskii(2011)]{lastreducer}
S.~Suri and S.~Vassilvitskii.
\newblock Counting triangles and the curse of the last reducer.
\newblock In \emph{Proceedings of the 20th international conference on World
  wide web}, pages 607--614. ACM, 2011.

\bibitem[Zadeh and Carlsson(2013)]{dimsum}
Reza~Bosagh Zadeh and Gunnar Carlsson.
\newblock Dimension independent matrix square using mapreduce.
\newblock 2013.

\end{thebibliography}

\end{document}